\newcommand{\be}[0]{\begin{equation}}
\newcommand{\ee}[0]{\end{equation}}
\numberwithin{equation}{section}
\theoremstyle{plain}
\newtheorem{theorem}{Theorem}[section]
\newtheorem{proposition}[theorem]{Proposition}
\theoremstyle{definition}
\begin{document}

\vspace*{-1cm}
\thispagestyle{empty}
\vspace*{1.5cm}

\begin{center}
{\Large 
{\bf The global anomalies of (2,0) superconformal field theories in six dimensions}}
\vspace{2.0cm}

{\large Samuel Monnier}
\vspace*{0.5cm}

Institut für Mathematik, 
Universität Zürich,\\
Winterthurerstrasse 190, 8057 Zürich, Switzerland

\vspace*{1cm}

{\bf Abstract}
\end{center}

We compute the global gauge and gravitational anomalies of the A-type (2,0) superconformal quantum field theories in six dimensions, and conjecture a formula valid for the D- and E-type theories. We show that the anomaly contains terms that do not contribute to the local anomaly but that are crucial for the consistency of the global anomaly. A side result is an intuitive picture for the appearance of Hopf-Wess-Zumino terms on the Coulomb branch of the (2,0) theories.

\newpage

\tableofcontents

\section{Introduction and summary}

Global gravitational anomalies \cite{Witten:1985xe} are anomalous phases picked by the partition function of quantum field theories under large diffeomorphisms of spacetime. Just as for local anomalies \cite{AlvarezGaume:1983ig}, their cancellation is required in quantum field theories arising as low energy effective descriptions of quantum theories of gravity, providing constraints on the latter. In non-gravitational theories, however, global anomalies need not vanish.

The aim of this paper is to compute the global gravitational anomalies of the 6-dimensional conformal field theories with (2,0) supersymmetry \cite{Witten:1995zh, Strominger:1995ac}, henceforth referred to as (2,0) theories. There are two main motivations for this computation, that will be presented in turn.

As we will explain in Section \ref{SecRemAnom}, the global anomaly of a $d$-dimensional quantum field theory $\mathfrak{F}$ is captured by an $\mathbb{R}/\mathbb{Z}$-valued geometric invariant ${\rm An}_\mathfrak{F}$ of $d+1$-dimensional manifolds. A large class of such invariants are Chern-Simons invariants, whose value on a $d+1$-dimensional manifold $U$ is given by the integral of a characteristic form of degree $d+2$ over a $d+2$-dimensional manifold $W$ bounded $U$. The knowledge of the local anomaly essentially amounts to the knowledge of a characteristic form $I$ in dimension $d+2$, and in simple cases, such as complex chiral fermions, ${\rm An}_\mathfrak{F}(U)$ is indeed simply given by the Chern-Simons invariant of $I$. However, such a formula can be consistent only when $I$ yields an integer whenever integrated over a closed manifold $W$. Indeed, this ensures that ${\rm An}_\mathfrak{F}(U)$ is well-defined modulo $\mathbb{Z}$.

The local anomaly of (2,0) theories has been computed in \cite{Harvey:1998bx} for theories in the A-series, in \cite{Yi:2001bz} for the D-series and a general formula, also valid for the E-series, has been conjectured in \cite{Intriligator:2000eq}. Given these expressions, it is easy to check that the corresponding degree 8 characteristic form $I$ does not integrate to an integer on closed $8$-dimensional manifolds (see equation \eqref{EqLocAn20Theory}). This shows that the Chern-Simons invariant of $I$ does not exist, and it is therefore an interesting task to determine the geometric invariant computing the anomaly of the (2,0) theory. We will show that the latter can be seen as the sum of the would-be Chern-Simons invariant of $I$ and an extra term that does not contribute to the local anomaly. While ill-defined separately, these two terms combine into a well-defined invariant of 7-dimensional manifolds.

The second motivation for the study of the global anomaly of (2,0) theories comes from the fact that they generate an impressive collection of supersymmetric theories in lower dimensions upon reduction. When reduced on a 4-manifold $X$, the (2,0) theory yields a 2-dimensional quantum field theory that can inherit a global gravitational anomaly, translating into a failure of modular invariance. The knowledge of the global anomaly of the (2,0) theory on generic 6-dimensional manifolds allows us in principle to compute the failure of modular invariance in the 2-dimensional theory in terms of the geometry and topology of $X$.

When reduced on a Riemann surface, the (2,0) theory yields a 4-dimensional supersymmetric theory. The latter admits an S-duality group given by the mapping class group of the Riemann surface \cite{Verlinde:1995mz, Witten:1997sc, Gaiotto:2009we}. The fact that the 6-dimensional theory has a global gravitational anomaly translates into the fact that the S-duality transformation of the 4-dimensional partition function is anomalous \cite{Vafa:1994tf, Labastida:1998sk}. Again, the knowledge of the 6-dimensional global gravitational anomaly allows us in principle to compute the anomalous transformation of the 4-dimensional theories under S-duality.

We will not venture into this interesting research program in the present paper, but only keep it in mind as a strong motivation for the derivation of a general anomaly formula for the (2,0) theory.

We can carry out rigorous computation of the global anomaly only for A-type theories. We use the fact that the latter can be realized on a stack of M5-branes in M-theory \cite{Strominger:1995ac}. In particular, there is a limit in which a set of $n$ parallel non-intersecting M5-branes flows to the $A_{n-1}$ (2,0) theory at a generic point of its Coulomb branch, together with a free tensor multiplet corresponding to the center of mass of the brane system. We showed recently in \cite{Monnierb} that the global anomaly of non-intersecting M5-branes vanishes, as is expected from the consistency of M-theory. In the present paper, we use this fact to derive the global anomaly of the (2,0) theory, in the same spirit as the derivation of the local anomaly in \cite{Harvey:1998bx}. To do so, we consider the M5-brane system above and pick a tubular neighborhood containing it. As we know that anomalies cancel in an M-theory spacetime including (non-intersecting) M5-branes, the anomaly of M-theory in the tubular neighborhood is due entirely to the presence of the boundary, and can essentially be computed by evaluating the M-theory Chern-Simons term on the boundary. One then obtains the anomaly of the (2,0) theory by subtracting the anomaly of the center of mass, which can be deduced from recent results about the global anomaly of the self-dual field \cite{Monnier2011a, Monniera}. One can then check explicitly that the geometric invariant obtained is well-defined, in the sense discussed above.

There is an essentially unique way of expressing the geometric invariant of the $A_n$ (2,0) theory in terms of Lie algebra data, and this provides a natural formula for the anomaly of the other (2,0) theories, which is automatically compatible with the exceptional isomorphisms between members of the A-D-E series. We check that the corresponding geometric invariant is well-defined as well for Lie algebras in the D and E series. A derivation of this formula in the $D_n$ case should be possible using the realization of the latter by $n$ M5-branes on a $\mathbb{R}^5/\mathbb{Z}_2$ orbifold. In this paper, we only point out that the anomaly of the $\mathbb{R}^5/\mathbb{Z}_2$ orbifold is not understood globally. Just like for the (2,0) theory, the Chern-Simons term obtained from the index density describing the local anomaly is ill-defined. In this case, however, we do not know how to compute the correct global anomaly.

In Section \ref{SecOrigHWZTerms}, we also present a simple picture for the appearance of the Hopf-Wess-Zumino terms present on the Coulomb branch of the (2,0) theory. Those terms can be thought of as the topological modes of the C-field living between the M5-branes, which have to persist when we scale distance between the M5-branes to zero in order to obtain the (2,0) theory.

Another interesting point is that the anomaly formula we derive suggests that more data is needed to define the (2,0) theory that was previously expected. In addition to a simply laced Lie algebra, a smooth oriented 6-manifold $M$, a rank 5 R-symmetry bundle $\mathscr{N}$ over $M$ and a spin structure on $TM \oplus \mathscr{N}$, we seem to need a global angular differential cohomology class on $\mathscr{N}$. This is a differential cohomology class on the 4-sphere bundle $\tilde{M}$ associated to $\mathscr{N}$, restricting on each fiber to a normalized top differential cohomology class on $\tilde{M}$. In the M-theory realization of the A-type theories, a choice of global angular differential cohomology class is required in order to perform the decoupling of the center-of-mass tensor multiplet. We should mention that when the fourth Stiefel-Whitney class of $\mathscr{N}$ vanishes, a canonical choice is available.

A conceptual way to think of anomalies is in terms of a field theory (in the mathematical sense of the term) in one dimension higher \cite{Freed:2014iua}. The geometric invariant computed in this paper is the partition function of this anomaly field theory. Other aspects of the anomaly field theory will be explored elsewhere \cite{Monnierc}. We should also mention that a discussion of the relation between the quantum field theory on a stack of M5-branes and a non-abelian Chern-Simons 7-dimensional theory appeared in \cite{Fiorenza:2012tb}.

We add two remarks to clarify the assumptions made in this paper and the caveats of the derivation. \footnote{We thank the referee for raising this point.} First, the anomaly cancellation check of \cite{Monnierb} was not quite complete, as it was assumed that all 7-dimensional manifolds $U$ involved in anomaly computations are bounded by 8-dimensional manifolds $W$. It was shown in \cite{Monnierb} that the possible obstruction, given by a certain cobordism group, is at most torsion. If the cobordism group turns out not to vanish, then the check in \cite{Monnierb} is incomplete and it is in principle possible that M-theory backgrounds containing certain configurations of M5-branes are anomalous under certain combinations of large diffeomorphisms and C-field gauge transformations. In this paper, we make the likely assumption that no such anomalies exist. (Their existence would imply a fundamental inconsistency of M-theory).

Second, to keep the derivation simple, we assume in this paper that the cobordism group vanishes, therefore that every $U$ is bounded by a $W$. This allows us to compute in Section \ref{SecEvCSTerm} the anomaly inflow using differential forms on $W$. As will be shown in \cite{Monnierc}, we are not losing any information from this assumption, because the anomaly inflow computation can be carried out on $U$, using the corresponding differential cocycles, and it yields the same result.

The paper is organized as follows. Section \ref{SecRemAnom} presents the relation between global anomalies of $d$-dimensional quantum field theories and geometric invariants of $d+1$-dimensional manifolds. We also review the known local anomalies of the (2,0) theories and explain why the associated Chern-Simons invariants are ill-defined. In Section \ref{SecGeomM5-branes}, we present aspects of the geometry of M5-branes necessary for our computation of the global anomaly. The derivation of the global anomaly of the A-type (2,0) theories is found in Section \ref{SecGlobAnAn}. We show that the anomaly formula determines a well-defined geometric invariant of 7-manifolds and comment on the appearance of conformal blocks and on the Hopf-Wess-Zumino terms present on the Coulomb branch of (2,0) theories. Section \ref{SecGenGlobAnForm} presents the general anomaly formula, conjecturally also valid for the D- and E-type theories, as well as a proof that the associated geometric invariants are well-defined.

\section{Some remarks about anomalies}

\label{SecRemAnom}

The aim of this section is to explain informally how the global anomaly of a $d$-dimensional quantum field theory can be described by a geometric invariant of $d+1$-dimensional manifolds. In Section \ref{SecAnomCob}, we introduce the anomaly line bundle and explain that its holonomies and transition functions can be computed by evaluating a geometric invariant on mapping tori and twisted doubles, respectively. In Section \ref{SecExamAnom}, we give some examples of anomalous theories and their geometric invariants. We introduce in Section \ref{SecIncons20TheoryAn} the local anomaly of the (2,0) theory and deduce a natural guess for its global anomaly. We explain why this naive guess cannot be correct, providing a motivation for the more careful derivation in the following sections.

\subsection{Global anomalies and cobordisms}

\label{SecAnomCob}

A global symmetry of a field theory on a $d$-dimensional manifold $M$ is associated to a current $J$. The latter can be sourced by a background field $A$, which belongs to an infinite-dimensional space of background fields $\mathcal{B}$. Two common examples of such symmetries are a global internal symmetry, described by a pointwise action of a Lie group $G$ on the fields of the theory, and the isometry group of spacetime, acting by pullback on the fields. The associated currents are the symmetry current and the energy-momentum tensor. The corresponding background fields in these two examples are a non-dynamical gauge field coupling to the current, and a (Riemannian or Lorentzian) metric on $M$. 

We can also consider the local transformations associated to the global symmetry. In our first examples, such local transformations are generated by the action on the fields of a section $g$ of a $G$-bundle over $M$. In the second example, the local transformations are the diffeomorphisms of $M$, or a subset of those, if some structure necessary for the definition of the field theory needs to be preserved. While a local transformation does not leave the action invariant, its effect can be compensated by a corresponding transformation on the background fields. In the first example, this is achieved by changing the background gauge field by the gauge transformation associated to $g$. In the second example, this is achieved by pulling back the metric of $M$ via the diffeomorphism.

In the quantum theory, we say that the global symmetry suffers from an \emph{anomaly} if the quantum theory turns out not to be invariant under the combined action of the local transformations on the fields and on the background fields. More precisely, we can see the partition function of the quantum field theory (as well as the associated correlation functions) as functions over the space of background fields $\mathcal{B}$. An anomaly is present if these functions are not invariant under the action of the group $\mathcal{G}$ of local transformations on $\mathcal{B}$. For unitary theories, the lack of invariance of the partition function $Z$ is only by a phase. Our aim in the present paper is to give a formula for these phases in the case of the 6-dimensional superconformal theories with (2,0) supersymmetries, when the local transformations are diffeomorphisms of the 6-dimensional spacetime.

A fruitful point of view on anomalies is the following. If $Z$ is not invariant under $\mathcal{G}$, it cannot define a function on the quotient $\mathcal{B}/\mathcal{G}$, seen as the space of gauge invariant background field data. However, $Z$ does define a section of a unitary $\mathcal{G}$-equivariant line bundle on $\mathcal{B}$. For all practical purposes, a $\mathcal{G}$-equivariant line bundle on $\mathcal{B}$ can be taken as the definition of a line bundle over $\mathcal{B}/\mathcal{G}$, valid even when the quotient is singular. Therefore, instead of defining a function over $\mathcal{B}/\mathcal{G}$, in general $Z$ defines a section of a unitary line bundle $\mathscr{L}$ over $\mathcal{B}/\mathcal{G}$.

From now on, in order to have a unified treatment, we include in the space of background field $\mathcal{B}$ all the data required to define our quantum field theory. In particular, a point of $\mathcal{B}$ specifies the $d$-dimensional spacetime $M$. $\mathcal{G}$ is then not exactly a group, but a groupoid obtained by the union of the groups of local transformations for each $M$, acting each on the respective component of $\mathcal{B}$. In this more general setting, the partition function still defines the section of a line bundle $\mathscr{L}$ over $\mathcal{B}/\mathcal{G}$.

How can we describe unitary line bundles and their sections over $\mathcal{B}/\mathcal{G}$? One way to do so is to pick some $\mathbb{R}/\mathbb{Z}$-valued geometric invariant of manifolds with boundary of dimension $d+1$. (Recall that the spacetime $M$ has dimension $d$.) We will write ${\rm An}_{\mathfrak{F}}$ for the geometric invariant describing the anomaly bundle of the quantum field theory $\mathfrak{F}$. By \emph{geometric invariant}, we mean a functional that depends on certain geometric or topological data on the $d+1$-dimensional manifold $U$, which after restriction to $\partial U$ defines a unique point in $\mathcal{B}$. The only requirement we put on ${\rm An}_{\mathfrak{F}}$ is that it is consistent with the gluing of manifolds along their boundaries. If $U_1$ has a boundary component $M$ and $U_2$ has a boundary component $\bar{M}$ ($M$ with the opposite orientation) such that the extra structure glues smoothly into a manifold $U_1 \cup_M U_2$, then we require that
\be
\label{EqFunctRelGeomInv}
{\rm An}_{\mathfrak{F}}(U_1) + {\rm An}_{\mathfrak{F}}(U_2) = {\rm An}_{\mathfrak{F}}(U_1 \cup_M U_2) \;.
\ee

In more abstract terms, we need to find a cobordism category $\mathfrak{C}$ whose objects are the elements of $\mathcal{B}$, i.e. $d$-dimensional manifolds endowed with all the structures we need to define our quantum field theory. ${\rm An}_{\mathfrak{F}}$ is then a functor from $\mathfrak{C}$ to the category whose only object is the complex line $\mathbb{C}$ and whose morphisms from $\mathbb{C}$ to itself are labeled by $U(1)$, identified with $\mathbb{R}/\mathbb{Z}$ via exponentiation.

The geometric invariant ${\rm An}_{\mathfrak{F}}$ then defines a unitary line bundle $\mathscr{L}$ with connection over $\mathcal{B}/\mathcal{G}$. For instance, a cobordism $U_b$ between the empty manifold and $b \in \mathcal{B}$ can be seen as defining the value at $b$ of (the pull-back of) a section $s$ of $\mathscr{L}$. Indeed, $b \in \mathcal{B}$ defines a manifold $M$ together with background fields, and there is a subset $\mathcal{B}_U \in \mathcal{B}$ consisting of the data that can be extended to $U$. The pull-back of $\pi^\ast(\mathscr{L})$ to $\mathcal{B}_U$ is trivial and we define $\pi^\ast(s)(b) = {\rm An}_{\mathfrak{F}}(U)$. As $b$ moves in $\mathcal{B}_U$, we obtain a function over $\mathcal{B}_U$, which is the pull-back of a section $s$ of $\mathscr{L}$. 

An element $g \in \mathcal{G}$ acts on $\mathcal{B}$ and induces a change of trivialization in $\pi^\ast(\mathscr{L})$. We can compute the phase of this change of trivialization by comparing the value of the pull-back of a given section $s$ at $b$ and at $g.b$. We know that $\pi^\ast(s)(b) = {\rm An}_{\mathfrak{F}}(U_b)$. Consider now the twisted double $U_g$ of $U_b$. This is the manifold obtained by gluing $U_b$ to $\bar{U}_b$ ($U_b$ with the opposite orientation) through the transformation $g$. Then ${\rm An}_{\mathfrak{F}}(U_g)$ is the logarithm of the phase associated to the change of trivialization induced by $g$. A simple reasoning shows that the phase obtained is independent of the choice of manifold $U_b$, i.e. of the choice of section of $\mathscr{L}$, see Figure \ref{Fig2}.

\begin{figure}[p]
  \vspace{-1cm}
  \centering
  \includegraphics[width=\textwidth]{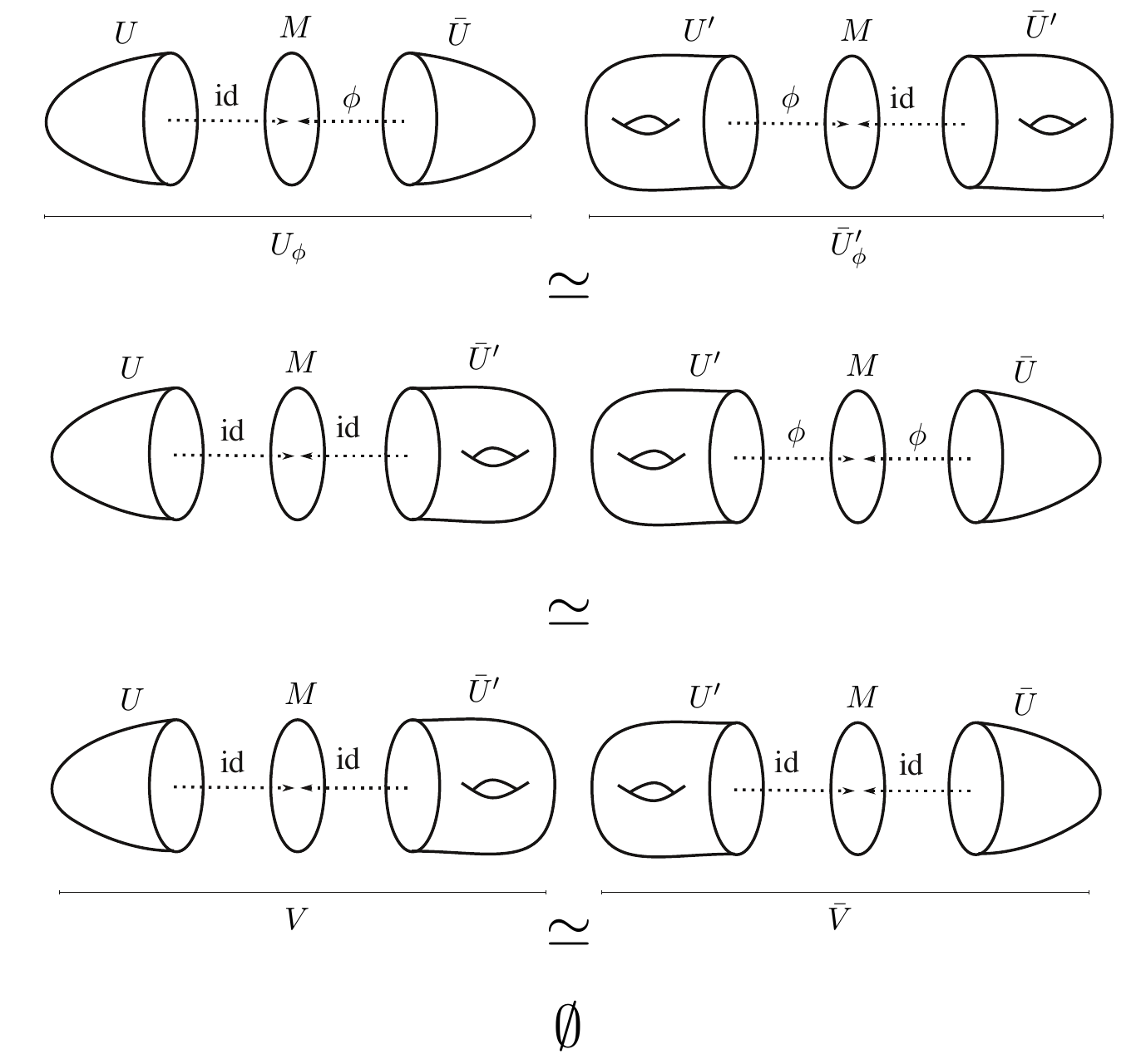}
  \caption{\emph{This figure illustrates the argument showing that the value of ${\rm An}_{\mathfrak{F}}$ on twisted doubles depends only on the gluing map $\phi$. We start by picking two manifolds $U$ and $U'$ bounded by $M$. On the top left, the twisted double $U_\phi$ is constructed by gluing two copies of $U$, one of them with its orientation reversed, with the help of the map $\phi$. On the top right, the same construction starting from $U'$, with the opposite orientation, yielding $\bar{U}'_\phi$. By rearranging the pieces, we obtain the second line. Then, noticing that the two twists cancel in the second gluing on the second line, we obtain on the third line $V = U \cup_{\rm id} \bar{U}'$ and $\bar{V}$. This pair of manifolds is bordant to the empty manifold, showing that ${\rm An}_{\mathfrak{F}}$ was zero all along and implying that ${\rm An}_{\mathfrak{F}}(U_\phi) = {\rm An}_{\mathfrak{F}}(U'_\phi)$. In terms of the line bundle $\mathscr{L}$, this translates into the fact that the transition functions do not depend on the sections used to compute them.}}
  \label{Fig2}
\end{figure}

The parallel transport along a path $p$ in $\mathcal{B}$ is given by a cylindrical cobordism $U_{[0,1]} = M \times [0,1]$ between $p(0)$ and $p(1)$, in such a way that $M \times \{t\} = p(t)$. In particular, a loop $c \in \mathcal{B}$ determines a closed $d+1$ manifold $U_c$, the mapping torus associated to $c$. $\exp 2\pi i {\rm An}_{\mathfrak{F}}(U_c) \in U(1)$ is the holonomy of the connection on $\mathscr{L}$ along $c$. This explains the appearance of mapping tori in the computations of global anomalies \cite{Witten:1985xe, Witten:1985mj, Monnier2011a, Monnierb}. (We are here glossing over the fact that the path or loop in $\mathcal{B}$ might not define unambiguously the data needed to compute the geometric invariant on the cylinder or mapping torus. Those subtleties will play no role in what follows.)

Let us remark that the construction using twisted doubles reviewed above allows us to compute the anomalous phases picked by the partition function of a quantum field theory without computing the latter explicitly, provided we know the invariant ${\rm An}_{\mathfrak{F}}$.

\subsection{Examples}

\label{SecExamAnom}

Let us now turn to some examples. An important example is 3-dimensional Chern-Simons theory, in which the above is well-known \cite{Witten:1988hf, Labastida1989, Freed:1992vw}. The anomalous field theory is the 2-dimensional chiral WZW model and ${\rm An}_{\rm WZW}$ is the Chern-Simons functional. Depending on whether we are considering the quantum \cite{Witten:1988hf, Labastida1989} or classical \cite{Freed:1992vw} theory, we consider the gauge field as dynamical or include it with the background fields. The anomaly line bundle associated to a surface by the Chern-Simons term is the line bundle over the moduli space of flat connections of which the WZW conformal blocks are sections. This line bundle extends as well over the space of conformal structures of the surface.

Another example, treated in detail in \cite{Dai:1994kq}, is the modified eta invariant $\xi$ of a Dirac operator $D$ on an odd-dimensional manifold of dimension $d+1$. $\xi$ is related to the eta invariant by $\xi = \frac{1}{2} (\eta + h)$, where $h$ denotes dimension of the kernel of $D$. It was shown in \cite{Dai:1994kq} that when we take ${\rm An}_{D_+} = \xi$, then $\mathscr{L}$ is the inverse of the determinant bundle of the associated chiral Dirac operator $D_+$ in dimension $d$. ${\rm An}_{D_+}$ then computes the global anomaly of the complex chiral fermionic theory in dimension $d$ associated to the Dirac operator $D_+$. In particular, the holonomies of the anomaly connection are given by $\tau = \exp 2\pi i \xi$ evaluated on mapping tori, in a suitable adiabatic limit in which the size of the base of the mapping tori tends to infinity \cite{Witten:1985xe, MR861886}. One can also compute the actual phase picked by the chiral fermion partition function under a diffeomorphism or a gauge transformation by evaluating $\tau$ on a twisted double, as explained above.

The latter example has the following interesting property. Assume that a closed $d+1$-dimensional spin manifold $U$ is bounded by a $d+2$-dimensional manifold $W$ on which the spin structure of $U$ extends, as well as any other data required to define $D$. The invariant $\xi$ can be computed using the Atiyah-Patodi-Singer index theorem \cite{Atiyah1973}:
\be
\xi = -{\rm index}(D_W) + \int_W I_{D_W}
\ee
where $D_W$ is a Dirac operator on $W$ restricting to $D$ on $U$, and $I_{D_W} = \hat{A}(TW) {\rm ch}(E)$ is the index density of $D_W$. (We expressed $D_W$ as the ordinary Dirac operator on $W$ twisted by a vector bundle $E$). Note that we are reading this formula only modulo 1, so the first term on the right-hand side is irrelevant. 

$I_{D_W}$ is exactly the characteristic form in $d+2$ dimension used to compute the local anomaly of the chiral fermionic theory \cite{AlvarezGaume:1983ig}. It can be related to the curvature of the anomaly line bundle $\mathscr{L}$ as follows. Recall that the holonomies of the connection on $\mathscr{L}$ can be computed by evaluating $\tau$ on a mapping torus. Assume that we are interested in a small homotopically trivial loop $c$ in $\mathcal{B}$. Then the mapping torus $U_c = M \times S^1$ is trivial and we can take $W = M \times D^2$, where $D^2$ is a 2-dimensional disk. We find therefore that the holonomy around $c$ is given by the integral of $I_{D_W}$ over $M \times D^2$. But the holonomy around $c$ is also given by the integral of the curvature of $\mathscr{L}$ over $D^2$. As this is true for all loops $c$, we find that the curvature of $\mathscr{L}$ is given by the degree 2 component of $\int_M I_{D_W}$, where $I_{D_W}$ is seen as a differential form on $M \times \mathcal{B}$. 

We deduce that the local anomaly polynomial, of degree $d+2$, of a quantum field theory is directly related to the curvature of the anomaly line bundle via integration over spacetime. Of course, the local anomaly does not capture all the information about the anomaly of a quantum field theory: there exist line bundles with non-trivial flat connections. The set of holonomies of the connection captures all the information about the anomaly and is refered to as the \emph{global anomaly} \cite{Witten:1985xe}. Equivalently, the anomaly is fully captured by the geometric invariant ${\rm An}_{\mathfrak{F}}$, and this is the point of view that we will take in this paper.

\subsection{The case of the (2,0) theory}

\label{SecIncons20TheoryAn}

Let us now focus on the (2,0) theory in six dimensions. The local gravitational anomaly of the (2,0) theory of type $A_n$ was derived from M-theory in \cite{Harvey:1998bx}. This result was extended to the type $D_n$ case in \cite{Yi:2001bz} and a general formula also valid for the E-type theories was conjectured in \cite{Intriligator:2000eq}. The degree 8 local anomaly polynomial reads
\be
\label{EqLocAn20Theory}
I_\mathfrak{g} = r(\mathfrak{g}) J_8 - \frac{|\mathfrak{g}|{\rm h}_\mathfrak{g}}{24} p_2(\mathscr{N}_W) \;.
\ee
$r(\mathfrak{g})$, $|\mathfrak{g}|$ and ${\rm h}_\mathfrak{g}$ denote respectively the rank, the dimension and the dual Coxeter number of the simple and simply laced Lie algebra $\mathfrak{g}$. $J_8$, whose explicit expression will appear below, is the anomaly polynomial for a single tensor multiplet in six dimensions. $p_2(\mathscr{N}_W)$ is the second Pontryagin class of the rank 5 bundle $\mathscr{N}_W$ over $W$ obtained by extending the R-symmetry bundle  of the (2,0) theory on $M$.

Given our experience with chiral fermions, one may be optimistic and guess that the value of the geometric invariant ${\rm An}_{\mathfrak{g}}$ governing the global anomaly of the (2,0) theory, evaluated on a manifold $U$ bounded by $W$ is simply given by
\be
\label{EqGuessAn20Theory}
\frac{1}{2\pi i} \ln {\rm An}_{\mathfrak{g}}(U) = \int_W I_\mathfrak{g} \;, \quad {\rm mod} \; 1 \;.
\ee
The problem is that \eqref{EqGuessAn20Theory} is inconsistent. 
An $\mathbb{R}/\mathbb{Z}$-valued geometric invariant on a $d+1$-dimensional manifold $U$ defined by integrating certain characteristic form $I$ on a bounded manifold of dimension $d+2$ can be well-defined only if $\int_W I$ is an integer for any closed manifold $W$. This is manifestly not the case for \eqref{EqGuessAn20Theory}. For instance, as we will see, $J_8$ can be written $\frac{1}{8}L(TW) + \frac{1}{2}I_f$, where $L(TW)$ is the Hirzebruch L-genus and $I_f$ is an index density with $\int_W I_f \in 2 \mathbb{Z}$ for $W$ closed. On a closed manifold, we have $\int_W L(TW) = \sigma_W$, the signature of the 8-dimensional manifold $W$. If $r(\mathfrak{g}) \sigma_W$ is not a multiple of 8, and in general it has no reason to be so, then $\int_W L(TW)$ cannot define a geometric invariant on $U$. The second term in \eqref{EqLocAn20Theory} does not define an invariant of $U$ either. One can check explicitly that $|\mathfrak{g}|{\rm h}_\mathfrak{g}$ is a multiple of 6, but $\int_W p_2(\mathscr{N}_W)$ has no particular evenness property on a closed manifold. As the coefficients of the two terms do not vary proportionally when we change $\mathfrak{g}$, there is no hope that \eqref{EqGuessAn20Theory} can be well-defined.

The problem calls therefore for a more careful study. We will see that \eqref{EqGuessAn20Theory} holds after adding extra terms on the right hand side that do not contribute to the local anomaly and that make the geometric invariant \eqref{EqGuessAn20Theory}  well-defined. Our strategy will be to focus first on A-type theories, through their realizations on stacks of M5-branes. We will then find a straightforward generalization to the D- and E-type theories.

\section{The geometry of M5-branes}

\label{SecGeomM5-branes}

In this section, we review some facts about the geometry of M5-branes that will be useful in the derivation of the anomaly of the (2,0) theory. In Section \ref{SecDisM5br}, we review the properties of the embedding of the M5-brane worldvolume into spacetime. Some subtleties about the coupling of the M-theory C-field to the worldvolume theory of the M5-brane are reviewed in Section \ref{SecEffC-field} and we introduce some important notations for the rest of the paper. In Section \ref{SecStackM5}, we generalize the analysis to the case of a stack of M5-branes and in Section \ref{SecExt7-8-manifolds}, we extend these constructions to 7- and 8-dimensional manifolds, as required by anomaly computations. In Section \ref{SecAnCancDisM5}, we review the M-theory Chern-Simons term and its role in anomaly cancellation in the presence of M5-branes.

We assume that the reader is familiar with the use of (shifted) differential cocycles to model higher $p$-form abelian gauge fields. The original reference is \cite{hopkins-2005-70}. An introduction for physicists can be found in Section 2 of \cite{Freed:2006yc}. Our notations follow Section 2.1 of \cite{Monnierb}, which can be read as a quick reminder. Differential cocycles and cohomology classes are written with a caron $\check{}$. What we often call the \emph{field strength} of a differential cocycle is sometimes called the curvature in the literature. The reason for our terminology is obvious: when the differential cocycle models an abelian gauge field, its curvature coincides with the field strength of the gauge field.

\subsection{Non-intersecting M5-branes}

\label{SecDisM5br}

We consider the low energy limit of M-theory on an 11-dimensional smooth oriented spin manifold $Y$, in the limit of vanishing gravitational coupling. It consists in 11-dimensional supergravity, together with a Chern-Simons term involving an important higher derivative correction \cite{Witten:1996md}. We work in Euclidean signature, so we take $Y$ to be Riemannian. We will be considering gauge transformations and diffeomorphisms that are the identity outside of a compact subset $U$ of the spacetime $Y$. This implies that we can freely modify $Y$ outside this subset and take it to be compact, possibly adding sources outside $U$ in order to satisfy the Gauss law of the gauge fields. 

Inside $U$, we choose a smooth oriented 6-dimensional manifold $M$, and we wrap one M5-brane on each of its connected components. We write $\mathscr{N}$ for the normal bundle of $M$ in $Y$. Our assumptions that $Y$ is oriented and spin and that $M$ is oriented imply that
\be
\label{EqRelSWClassesTMN}
w_1(TM) = w_1(\mathscr{N}) = 0 \;, \quad w_2(TM) + w_2(\mathscr{N}) = 0 \;, \quad w_5(\mathscr{N}) = 0 \;.
\ee
The last equality is not obvious and its proof can be found in Appendix A of \cite{Monnierb}. It should be also emphasized that it ceases to be automatically true once we extend these constructions from the 6-dimensional manifold $M$ to an 8-dimensional manifold $W$. In this case it will assumed.

We pick a tubular neighborhood $N$ of $M$ of radius $\delta$, which will eventually be taken to zero, and we write $\tilde{M}$ for its boundary. $\tilde{M}$ is a 4-sphere bundle over $M$, and we write $\pi$ for the bundle map $\tilde{M} \rightarrow M$. We have
\be
T\tilde{M} \oplus \mathbb{R}_{\tilde{M}} \simeq \pi^\ast(TM \oplus \mathscr{N}) \;, 
\ee
with $\mathbb{R}_{\tilde{M}}$ a trivial line bundle over $\tilde{M}$. This implies that for any stable characteristic class $c$, such as the Pontryagin or Stiefel-Whitney classes, we have
\be
\label{EqRelCharClassMtMN}
c(T\tilde{M}) = \pi^\ast(c(TM \oplus \mathscr{N})) \;.
\ee

\subsection{The effective C-field}

\label{SecEffC-field}

It is well known that the quantization of the fluxes of the M-theory C-field are shifted: they are integral or half-integral depending on the parity of the periods of $w_4(TY)$ \cite{Witten:1996md}. The precise way of encoding this statement is to see the C-field as an element $\check{C}$ of a group of shifted differential cocycles, written $\check{Z}_{\lambda}$ in Section 2.1 of \cite{Monnierb}. The shift $\lambda$ is a half-integer-valued cocycle such that $2\lambda$ is a lift of $w_4(TY)$ as an integral cocycle (i.e. a period of $2\lambda$ on a 4-cycle is even or odd depending on whether the period of $w_4(TY)$ is 0 or 1). From now on, we will refer to this shift simply as \emph{a shift by} $w_4(TY)$. We will similarly encounter later differential cocycles shifted by the degree 4 Wu class of $M$, i.e. by $w_4(TM) + (w_2(TM))^2$. 

The M-theory C-field sources the self-dual two-form gauge field on the worldvolume of the M5-brane. However, it is not trivial to restrict the C-field to the M5-brane worldvolume. Indeed, the M5-brane itself sources the C-field in the bulk, which means that the integral of the C-field field strength $G$ on any 4-sphere linking $M$ is equal to 1. This implies that $G$ diverges near $M$. If the normal bundle $\mathscr{N}$ is trivial, a trivialization defines longitudinal and normal components of elements of $T^\ast M$. The divergent part of the four-form $G$ is purely normal, and one can restrict the longitudinal component to $M$. However, this strategy does not work if the normal bundle is non-trivial.

In Section 2.3 of \cite{Monnierb}, we explained how to define in the general case the effective C-field on the worldvolume. In terms of differential cocycles, the restriction reads
\be
\label{EqDefRestC-field}
\check{C}_M = \frac{1}{2}\pi_\ast(\check{C}_{\tilde{M}} \cup \check{C}_{\tilde{M}}) \;.
\ee
Here $\pi_\ast$ is the pushforward map on differential cocycles associated to the fiber bundle $\tilde{M} \stackrel{\pi}{\rightarrow} M$, $\check{C}_{\tilde{M}}$ is the (non-singular) restriction to $\tilde{M}$ of the C-field on $Y$, and $\cup$ is the cup product on differential cocycles. Let us remark that the factor $\frac{1}{2}$ in \eqref{EqDefRestC-field} makes it not obvious that the differential cohomology class of $\check{C}_M $ depends only on the differential cohomology class of $\check{C}_{\tilde{M}}$, i.e. that \eqref{EqDefRestC-field} is gauge invariant. This can be shown by performing an explicit gauge transformation on $\check{C}_{\tilde{M}}$ in \eqref{EqDefRestC-field} and noticing that the factor $\frac{1}{2}$ do not appear in the variation of $\check{C}_M$ \cite{Monnierb}.

One can show that this definition reduces to the intuitive one sketched above when $\mathscr{N}$ is trivial. We also showed in \cite{Monnierb} that it passes a highly non-trivial consistency test: $\check{C}_M$ is a differential cocycle on $M$ shifted by the degree 4 Wu class of $M$, which is exactly what is required to define consistently the coupling to the worldvolume self-dual field \cite{Witten:1996hc}. (To be precise, the degree 4 Wu class of $M$ always vanishes, for dimensional reasons. We will however momentarily extend these constructions to manifolds of dimension 8, whose degree 4 Wu classes can be non-trivial.)

For explicit computations, it will be useful to choose an unshifted differential cocycle $\check{a}_{\tilde{M}}$, whose field strength $f_{\tilde{M}}$ integrates to $1$ over the 4-sphere fibers to $\tilde{M}$. We will refer to $\check{a}_{\tilde{M}}$ in the following as a \emph{global angular differential cocycle}. $\check{C}_{\tilde{M}}$ and its field strength $G_{\tilde{M}}$ can then be written
\be
\label{EqParCFieldTMSinM5}
\check{C}_{\tilde{M}} = \check{a}_{\tilde{M}} + \pi^\ast(\check{A}_M) \;, \quad G_{\tilde{M}} = f_{\tilde{M}} + \pi^\ast(F_M) \;,  
\ee
for some differential cocycle $\check{A}_M$ shifted by $w_4(TM \oplus \mathscr{N})$, with field strength $F_M$. The coefficient of $\check{a}_{\tilde{M}}$ in \eqref{EqParCFieldTMSinM5} is 1 because the M5-brane supported on $M$ sources one unit of flux of the C-field. The effective C-field \eqref{EqDefRestC-field} and its field strength $G_M$ then read
\be
\check{C}_M = \check{b}_M + \check{A}_M \;, \quad G_M = h_M + F_M \;,
\ee
where we defined
\be
\label{EqDefb}
\check{b}_M = \frac{1}{2} \pi_\ast(\check{a}_{\tilde{M}} \cup \check{a}_{\tilde{M}})
\ee
and wrote $h_M$ for the field strength of $\check{b}_M$. The differential cocycle $\check{b}_M$ gives rise to a well-defined differential cohomology class for the same reason as $\check{C}_M$ does, see \eqref{EqDefRestC-field}. Results of \cite{Witten:1999vg} show that it is shifted by $w_4(\mathscr{N})$. The differential cocycles $\check{a}_{\tilde{M}}$ and $\check{b}_{M}$ will play an important role in what follows.

\subsection{Stacks of M5-branes}

\label{SecStackM5}

We point out here the differences arising when $M$ supports a stack of $n$ M5-branes, rather than a single one. The flux through the fibers of $\tilde{M}$ is now $n$ units. Using \eqref{EqParCFieldTMSinM5}, we can parameterize the C-field on $M$ as follows:
\be
\label{EqParCFieldTMStaM5}
\check{C}_{\tilde{M}} = n \check{a}_{\tilde{M}} + \pi^\ast(\check{A}_M) \;, \quad G_{\tilde{M}} = nf_{\tilde{M}} + \pi^\ast(F_M) \;.  
\ee
$\check{A}_M$ is as before a differential cocycle shifted by $w_4(TM \oplus \mathscr{N}_M)$. Under changes of the parameterization \eqref{EqParCFieldTMStaM5}, we have
\be
\label{EqReparC-fieldtM1}
\check{a}_{\tilde{M}} \rightarrow \check{a}_{\tilde{M}} + \pi^\ast(\check{B}_M) \;, \quad \check{A}_M \rightarrow \check{A}_M - n \check{B}_M 
\ee
for $\check{B}_M$ an unshifted differential cocycle on $M$. We can also define
\be
\label{EqDefEffCn}
\check{C}_{M,n} := n\check{b}_M + \check{A}_M \;, \quad G_{M,n} := nh_M + F_M \;,
\ee
which is invariant under \eqref{EqReparC-fieldtM1}. We can also write $\check{C}_{M,n} = \frac{1}{2n}\pi_\ast(\check{C}_{\tilde{M}} \cup \check{C}_{\tilde{M}})$. Depending on whether $n$ is even or odd, $\check{C}_{M,n}$ is shifted by $w_4(TM \oplus \mathscr{N})$ or by the degree 4 Wu class of $M$. The differential cocycle
\be
\label{EqDefEffC1}
\check{C}_M := \check{b}_M + \check{A}_M
\ee
is shifted by the Wu class of $M$ and will play an important role in what follows. Remark that $\check{C}_M$ depends on a choice of parameterization \eqref{EqParCFieldTMStaM5}.

Simplifications occur when $w_4(\mathscr{N}) = 0$. Indeed, consider the vertical tangent bundle $T_V \tilde{M}$. Remark that its Euler class $e(T_V \tilde{M})$ integrates to 2 over the 4-sphere fibers of $M$, because the Euler number of a 4-sphere is 2. Modulo 2, we have 
\be
e(T_V \tilde{M}) = w_4(T_V \tilde{M}) = \pi^\ast(w_4(\mathscr{N})) \;.
\ee
Therefore, if $w_4(\mathscr{N}) = 0$, $e(T_V \tilde{M})$ can be divided by 2. \cite{Bott1998} shows that $\pi_\ast\left( \frac{1}{2} e(T_V \tilde{M}) \cup \frac{1}{2}e(T_V \tilde{M})\right)$ is at most torsion. The above holds for the differential refinement $\check{e}(T_V \tilde{M})$ obtained from the metric on $\tilde{M}$. We can therefore take $\check{a}_{\tilde{M}} = \frac{1}{2} \check{e}(T_V M) + \pi^\ast(\check{t})$, for some differential cocycle $\check{t}$ representing a torsion differential cohomology class. We then have 
\be
\check{b}_M = \pi_\ast\left( \frac{1}{2} \check{e}(T_V \tilde{M}) \cup \frac{1}{2}\check{e}(T_V \tilde{M})\right) + \check{t}
\ee
and we can pick $\check{t}$ so that $\check{b}_M = 0$. Equations \eqref{EqDefEffCn} and \eqref{EqDefEffC1} then simplify.

\subsection{Extension to manifolds of dimension 7 and 8}

\label{SecExt7-8-manifolds}

As reviewed in Section \ref{SecRemAnom}, the computation of the anomaly of a quantum field theory in dimension $d$ involves manifolds of dimension $d+1$ and $d+2$. Taking $X$ to be a 7- or 8-dimensional manifold, we endow it with a rank 5 bundle $\mathscr{N}_X$ satisfying \eqref{EqRelSWClassesTMN}. (From now on we will analogously write $\mathscr{N}_M$ for the normal bundle over $M$.) We then have a 4-sphere bundle $\tilde{X}$ over $X$ whose stable characteristic classes satisfy \eqref{EqRelCharClassMtMN}. As before, we write $\pi$ for the bundle map $\tilde{X} \rightarrow X$. 

$\check{C}_{\tilde{X}}$ is a differential cocycle on $\tilde{X}$ shifted by $\pi^\ast(w_4(TX \oplus \mathscr{N}_X))$. The constructions of Sections \ref{SecEffC-field} and \ref{SecStackM5} can be repeated on $X$, yielding differential cocycles $\check{a}_{\tilde{X}}$, $\check{A}_X$, $\check{b}_X$.

In the following, we will follow the notation in Section \ref{SecRemAnom} and write $U$ and $W$ for 7- and 8-dimensional manifolds, respectively. As we will argue below, the decoupling of the center of mass degrees of freedom on a stack of M5-branes requires a choice of global angular differential cocycle $\check{a}_{\tilde{M}}$, as introduced in \eqref{EqParCFieldTMStaM5}. It is therefore natural to consider the following. 6-dimensional closed smooth oriented Riemannian manifolds $M$, together with data $\mathfrak{d}_M = (\mathscr{N}_M, \check{C}_{\tilde{M}}, \check{a}_{\tilde{M}})$, can be seen as the objects of a cobordism category $\mathfrak{C}$, whose bordisms are oriented smooth Riemannian manifolds $U$ with boundary, together with data $\mathfrak{d}_U = (\mathscr{N}_U, \check{C}_{\tilde{U}}, \check{a}_{\tilde{U}})$. Of course, we require that if $(U,\mathfrak{d}_U)$ is a cobordism with boundary $(M,\mathfrak{d}_M)$, then $\mathfrak{d}_U|_{M} = \mathfrak{d}_M$. We also require that the Riemannian metric of $U$ is isomorphic to a direct product in a neighborhood of $\partial U$. Similarly, we will consider 8-dimensional cobordisms $(W,\mathfrak{d}_W)$ bounded by 7-dimensional closed manifolds $(U,\mathfrak{d}_U)$.

\subsection{Anomaly cancellation for non-intersecting M5-branes}

\label{SecAnCancDisM5}

M-theory on $Y$ contains a Chern-Simons term reading
\be
\label{EqM-thCSTerm}
{\rm CS}_{11} =  2\pi i \int_Y \left(\frac{1}{6} C \wedge G \wedge G - C \wedge I_8 \right) \;,
\ee
when the C-field is topologically trivial and can be represented by a 3-form $C$ with field strength $G$. The index density $I_8$ is defined in terms of the Pontryagin classes of $TY$ by
\be
I_8 = \frac{1}{48} \left(p_2(TY) + \left(\frac{p_1(TY)}{2}\right)^2\right) \;.
\ee
A more general formulation in terms of eta invariants can be found in \cite{Diaconescu:2003bm}. Alternatively, we can express it in differential cohomology. The integral Pontryagin cohomology class and the metric on $Y$ determine a differential cohomology class admitting $I_8$ as its field strength, which can be lifted to a differential cocycle $\check{I}_8$. In terms of the shifted differential cocycle $\check{C}$ describing the C-field, the Chern-Simons term \eqref{EqM-thCSTerm} can be written 
\be
\label{EqM-thCSTermDiffC}
{\rm CS}_{11} =  2\pi i \int_Y \left(\frac{1}{6} \check{C} \cup \check{C} \cup \check{C} - \check{C} \cup \check{I}_8 \right) \;,
\ee
where $\cup$ and $\int$ are the cup product and integral in differential cohomology \cite{hopkins-2005-70}. The integral of a differential cocycle of degree 12 on an 11-dimensional manifold gives an element of $\mathbb{R}/\mathbb{Z}$, reproducing the fact that the Chern-Simons term is defined only modulo $2\pi i$.

The Chern-Simons term \eqref{EqM-thCSTermDiffC} is a geometric invariant in the sense discussed in Section \ref{SecRemAnom}. In particular, it defines an anomaly line bundle over the base of families of 10-dimensional manifolds. When evaluated on a 11-dimensional manifold with boundary, it provides a section of this line bundle. As a result, when $Y$ has a boundary, \eqref{EqM-thCSTermDiffC} is not invariant under diffeomorphisms of and gauge transformations of the C-fields. There is both a gravitational and a gauge anomaly, which are canceled by the fields living on the boundaries of M-theory spacetimes \cite{Horava:1996ma}.

When the spacetime $Y$ has no boundaries but contains M5-branes wrapped on $M$, one is also naturally led to consider \eqref{EqM-thCSTermDiffC} on a manifold with boundary. As was already mentioned above, in this case the C-field, and therefore \eqref{EqM-thCSTermDiffC}, is defined only on $Y \backslash M$. Cutting out a small neighborhood $N$ of $M$, $CS_{11}$ needs to be evaluated on the manifold $Y \backslash N$, which has boundary $\tilde{M}$. This shows that the bulk action of M-theory has both gauge and gravitational anomalies in the presence of M5-branes. Those anomalies cancel against the anomalies present on the worldvolume of the M5-branes. This was discussed in \cite{Duff:1995wd,Witten:1996hc} and shown \cite{Freed:1998tg, Lechner:2001sj} for local anomalies. Global anomalies were shown to cancel in \cite{Monnierb}. 

For our purpose, this implies that in order to compute the anomaly associated to a system of (non-intersecting) M5-branes in some region of space, it is sufficient to evaluate the Chern-Simons term \eqref{EqM-thCSTermDiffC} on the boundary of a region containing them.

\section{Global anomalies of A-type (2,0) theories}

\label{SecGlobAnAn}

We compute in this section the global anomaly (2,0) theories in the A-series. In Section \ref{SecIdComp}, we introduce the scaling limit in which we obtain the (2,0) theory from a system of M5-branes. The computation of the anomaly of the stack of M5-branes is performed in Section \ref{SecEvCSTerm}. We then determine the anomaly of the center of mass tensor multiplet in Section \ref{SecGlobAnCM}, and deduce from it the global anomaly of the (2,0) theory in Section \ref{SecGlobAn20ThAn}. In Section \ref{SecConsCheck}, we check that the anomaly formula determines a well-defined geometric invariant of 7-manifolds. Section \ref{SecConfBlocks} presents the relation of the anomaly line bundle to the conformal blocks of the (2,0) theory and we discuss in Section \ref{SecOrigHWZTerms} a conceptual picture for the origin of the Hopf-Wess-Zumino terms present on the Coulomb branch of the (2,0) theory.

\subsection{Idea of the computation}

\label{SecIdComp}

We pick a compact smooth oriented 6-dimensional manifold $M$ and a rank 5 vector bundle $\mathscr{N}_M$ on $M$ whose Stiefel-Whitney classes satisfy \eqref{EqRelSWClassesTMN}. The total space of $\mathscr{N}_M$ is an oriented spin manifold, which we will see as an M-theory spacetime. We assume that $M$ carries a Riemannian metric and that $\mathscr{N}_M$ carries a connection. We endow $\mathscr{N}_M$ with a compatible metric. Inside $\mathscr{N}_M$, points at a fixed distance $R$ from the origin form a 4-sphere bundle $\tilde{M}$ over $M$.

We pick $n$ non-intersecting sections of $\mathscr{N}_M$ on which we wrap $n$ M5-branes. We assume that the largest distance between an M5-brane and the origin is $r$. 

As this system is formulated on a non-compact manifold, it displays a global anomaly under diffeomorphisms and gauge transformations that are not compactly supported. As explained in Section \ref{SecRemAnom}, the anomaly can be computed from a closed 7-manifold $U$. $U$ can be a mapping torus of $M$, if we are interested in computing the holonomy of the anomaly connection, or a twisted double, if we are interested in computing the anomalous phase of the partition function under a particular transformation. In any case, $U$ comes with the data $\mathfrak{d}_U = (\mathscr{N}_U, \check{C}_{\tilde{U}}, \check{a}_{\tilde{U}})$ extending the corresponding data on $M$ as described in Section \ref{SecExt7-8-manifolds}. We know from \cite{Monnierb} that the global anomaly vanishes in the bulk, so it can be computed by evaluating the M-theory Chern-Simons term on the asymptotic boundary of $\mathscr{N}_U$, which is diffeomorphic to $\tilde{U}$, the 4-sphere bundle over $U$ associated to $\mathscr{N}_U$.

We now take a decoupling limit in which we rescale both the Planck length $l_P$ and the fibers of $\mathscr{N}_M$, in a way such that $r/l_P^3$ stays constant \cite{Maldacena:1997re}. This limit is such that the M2-branes that might stretch between the M5-brane have constant energy. It ensures that the energy scale at which the gauge symmetry of the (2,0) theory is broken is constant. In the limit, we obtain effectively a free tensor supermultiplet describing the center of mass of the system, together with a (2,0) superconformal field theory of type $A_{n-1}$ at a generic point on its Coulomb branch. These theories are living on $M$, seen as the zero section of $\mathscr{N}_M$.

The global anomaly of the system does not change when we take the limit. As a consequence, we see that we can compute the global anomaly of the (2,0) superconformal field theory of type $A_{n-1}$ (together with the anomaly due to the center of mass) by evaluating the M-theory Chern-Simons term on $\tilde{U}$. Moreover, the anomaly has to be constant across the Coulomb branch. The computation to be performed below, a priori valid only at a generic point of the Coulomb branch, is therefore valid everywhere on the Coulomb branch.

\subsection{Evaluation of the Chern-Simons term}

\label{SecEvCSTerm}

After the limit described above has been taken, both the C-field and the metric on $\mathscr{N}_U$ are spherically symmetric. Moreover, the M-theory spacetime is empty away from the zero section. This implies that the Chern-Simons term can be evaluated on any round sphere bundle $\tilde{U} \subset \mathscr{N}_U$ centered around the origin. Taking $\tilde{U}$ to be a 4-sphere bundle with a finite radius avoids the slight complications coming from the fact that the metric blows up and the C-field field strength tends to zero as one approaches the asymptotic boundary of $\mathscr{N}_U$. Let us note that if $U$ is a mapping torus, adiabatic limits have to be taken in the formulas below. In the adiabatic limit, the metric along the base circle $c$ of $U$ blows up. To simplify the notation, we will suppress the adiabatic limits from the notation. No adiabatic limit is necessary in the case of most interest to us, when $U$ is a twisted double. 

We assume that $(U,\mathfrak{d}_U)$ is the boundary of $(W,\mathfrak{d}_W)$ (see Section \ref{SecExt7-8-manifolds}). The cobordism group computing the obstruction to the existence of $(W,\mathfrak{d}_W)$ has been described in Appendix C of \cite{Monnierb}. It is not known explicitly, but is at most torsion. To compute the anomaly of a stack of $n$ M5-branes, we need to evaluate 
\begin{align}
\label{EqEvCSTerm20}
{\rm An}_{nM5}(U) = & \, - \int_{\tilde{U}}  \left(\frac{1}{6} \check{C}_{\tilde{W}} \cup \check{C}_{\tilde{W}} \cup \check{C}_{\tilde{W}} - \check{C}_{\tilde{W}} \cup \check{I}_8\right) \\
= & \, - \int_{\tilde{W}}  \left(\frac{1}{6} G_{\tilde{W}} \wedge G_{\tilde{W}} \wedge G_{\tilde{W}} - G_{\tilde{W}} \wedge I_8\right) \notag \;,
\end{align}
where in the second line we expressed the Chern-Simons term on $\tilde{U}$ as the integral of the associated characteristic form on $\tilde{W}$. As explained in Section \ref{SecStackM5}, the C-field and its field strength on $\tilde{W}$ can be written
\be
\label{EqDecompGtW}
\check{C}_{\tilde{W}} = n \check{a}_{\tilde{W}} + \pi^\ast \check{A}_W \;, \quad G_{\tilde{W}} = n f_{\tilde{W}} + \pi^\ast F_W \;,
\ee
where $G_{\tilde{W}}$, $f_{\tilde{W}}$ and $F_W$ are the field strengths of $\check{C}_{\tilde{W}}$, $\check{a}_{\tilde{W}}$ and $\check{A}_W$, respectively.
$f_{\tilde{W}}$ integrates to 1 on the 4-sphere fibers of $\tilde{W}$. The term $n f_{\tilde{W}}$ in the field strength of the C-field comes from the fact that we have $n$ M5-branes at the origin sourcing the C-field. \eqref{EqDecompGtW} can be reparameterized as follows:
\be
\label{EqReparC-fieldtW}
\check{a}_{\tilde{W}} \rightarrow \check{a}_{\tilde{W}} + \pi^\ast(\check{B}_W) \;, \quad \check{A}_W \rightarrow \check{A}_W - n \check{B}_W \;,
\ee
for any degree 4 unshifted differential cocycle $\check{B}_W$. The minus sign in \eqref{EqEvCSTerm20} comes from the fact that the orientation of the boundary $\tilde{U}$ is reversed compared to \cite{Monnierb}. Equivalently \eqref{EqEvCSTerm20} yields directly the anomaly of the stack of M5-branes, as opposed to the anomaly inflow required to cancel it.

We now want to express \eqref{EqEvCSTerm20} as an integral on $W$. We can proceed as in Section 3.3 of \cite{Monnierb}. First, we see the integral on $\tilde{W}$ as the composition of a pushforward $\pi_\ast$ along the 4-sphere fibers with integration on $W$. The pushforward satisfies the relations 
\be
\label{EqRelPushForw}
\pi_\ast( \pi^\ast(x)) = 0 \;, \quad \pi_\ast(y \wedge \pi^\ast(x)) = \pi_\ast(y) \wedge x \;, \quad \pi_\ast(f_{\tilde{W}}) = 1 \;,
\ee
valid for differential forms $x \in \Omega^\bullet(W)$ and $y \in \Omega^\bullet(\tilde{W})$. The right-hand side of \eqref{EqEvCSTerm20} reads
\be
\label{EqIntFibSt1}
-\int_{W} \pi_\ast  \left(\frac{1}{6} (nf_{\tilde{W}} + \pi^\ast F_W)^3 - (nf_{\tilde{W}} + \pi^\ast F_W) \wedge I_8\right) \;.
\ee
Note that in this equation, the Pontryagin forms in $I_8$ are those of $T\tilde{W}$, and \eqref{EqRelCharClassMtMN} shows that they are the pull-back to $\tilde{W}$ of the Pontryagin forms of $TW \oplus \mathscr{N}_W$ on $W$. Using the latter fact and \eqref{EqRelPushForw}, we get
\be
\label{EqIntFibSt2}
-\int_{W}  \left(\frac{n^3}{6} \pi_\ast(f_{\tilde{W}}^3) + \frac{n^2}{2} \pi_\ast(f_{\tilde{W}}^2) \wedge F_W + \frac{n}{2}F_W^2 - nI_8 \right) \;,
\ee
where now $I_8$ is constructed out of the Pontryagin forms of $TW \oplus \mathscr{N}_W$. Next, we use the notation introduced in Section \ref{SecStackM5} to rewrite \eqref{EqIntFibSt2}:
\be
\label{EqIntFibSt3}
-\int_{W}  \left(n^3 \left(\frac{1}{6} \pi_\ast(f_{\tilde{W}}^3) - \frac{1}{8} \pi_\ast(f_{\tilde{W}}^2)^2\right) + \frac{n}{2}G_{W,n}^2 - nI_8 \right) \;.
\ee
The coefficient of $n^3$ is $\frac{1}{24}p_2(\mathscr{N}_W)$, as explained in Section 3.3 of \cite{Monnierb}. We further define the index density
\be
J_8 := I_8 - \frac{1}{24} p_2(\mathscr{N}_W)\;, 
\ee
computing the local anomaly of a free tensor multiplet, and we obtain 
\be
\label{EqAnnM53}
{\rm An}_{nM5}(U) = \int_W \left( nJ_8 - \frac{n^3-n}{24} p_2(\mathscr{N}_W) - \frac{n}{2}G_{W,n}^2 \right) \;.
\ee
Remark that $G_{W,n}$ is invariant under the reparameterization \eqref{EqReparC-fieldtW}, so \eqref{EqAnnM53} is manifestly invariant as well.

\subsection{The global anomaly of the center of mass}

\label{SecGlobAnCM}

\eqref{EqAnnM53} describes the global anomaly of the stack of $n$ M5-branes, corresponding to the (2,0) theory of type $A_n$ together with a free tensor supermultiplet of charge $n$, describing the center of mass of the system, as well as the degrees of freedom related to it by supersymmetry. In order to isolate the contribution from the (2,0) theory, we need to compute the global anomaly due to the free tensor multiplet.

To derive it, we temporarily ignore the fermions in the tensor multiplet, which do not have a gauge anomaly. The global anomaly of a self-dual field of charge 1 is given by \cite{hopkins-2005-70, Monniera}
\be
\label{EqAnnSD1}
{\rm An}_{SD,1}(U) = \int_W \left( \frac{1}{8}L(TW) - \frac{1}{2} G^2_W \right) \;,
\ee
where $L(TW)$ is the Hirzebruch genus of $TW$. $G_W$ is the field strength of a degree 4 differential cocycle $\check{C}_W$, modeling a 3-form gauge field coupling to the self-dual field. For the anomaly \eqref{EqAnnSD1} to be well-defined, in the sense discussed in Section \ref{SecIncons20TheoryAn}, it is crucial that $\check{C}_W$ is a differential cocycle shifted by the Wu class, as explained in Appendix \ref{SecProofInt}. Our aim is to separate the gravitational anomaly from the gauge anomaly in this expression. This is not a trivial problem, because although the first term in \eqref{EqAnnSD1} seems to capture the gravitational anomaly and the second one the gauge anomaly, they are not separately well-defined. For instance, the first term is obviously not an integer when evaluated on a closed manifold whose signature is not a multiple of 8. 

This problem can be cured by rewriting \eqref{EqAnnSD1} as 
\be
\label{EqAnnSD12}
{\rm An}_{SD,1}(U) = \frac{1}{8} \int_W (L(TM) - \sigma_W) -  \int_W \left(\frac{1}{2}G^2_W - \frac{1}{8}\sigma_W) \right) \;,
\ee
where $\sigma_W$ denotes the signature of the (non-degenerate) intersection form on the image of $H^4(W,\partial W; \mathbb{R})$ in $H^4(W; \mathbb{R})$. The point of this rewriting is that each of the two integrals yields an integer when evaluated on a closed manifold $W$, as explained in Appendix \ref{SecRelLiftWuClass}. Novikov's additivity theorem for the signature also ensures that the corresponding geometric invariants satisfy \eqref{EqFunctRelGeomInv}. Also, the dependence on the metric and on the C-field of the two terms remain unchanged compared to \eqref{EqAnnSD1}. We can therefore interpret the first term as the gravitational anomaly of the self-dual field, and the second one as the gauge anomaly, consistently with the detailed analysis of \cite{Monnier2011a, Monniera}. Both of these anomalies are well-defined in the sense of Section \ref{SecIncons20TheoryAn}.

The gravitational anomaly of a self-dual field of charge $n$ is the same as the one of a self-dual field of charge $1$, while its gauge anomaly is $n$ times larger. (More precisely, its gauge anomaly line bundle is the $n$th tensor product of the gauge anomaly line bundle of a self-dual field of charge 1. This implies that the holonomies and transition functions are taken to the $n$th power.) These facts determine the global anomaly of a self-dual field of charge $n$ to be 
\be
\label{EqAnnTMn1}
An_{SD,n}(U) =  \frac{1}{8} \int_W (L(TM) - \sigma_W) -  n \int_W \left(\frac{1}{2}G^2_W - \frac{1}{8}\sigma_W) \right) \;.
\ee
We deduce that the global anomaly of a tensor multiplet of charge $n$ is given by
\be
\label{EqAnnTMn2}
{\rm An}_{TM,n}(U) =  \int_W \left((J_8 + \frac{n-1}{8}\sigma_W - \frac{n}{2}G^2_W \right) \;.
\ee

\subsection{The global anomaly of the (2,0) theory}

\label{SecGlobAn20ThAn}

In \eqref{EqAnnTMn2}, $G_W$ is the field strength of a differential cocycle $\check{C}_W$ shifted by the Wu class. What is the differential cocycle that should be identified with $\check{C}_W$ when the tensor multiplet is the center of mass of a stack of M5-branes? It would be natural to set $\check{C}_W = \check{C}_{W,n}$, but this would be inconsistent, as $\check{C}_{W,n}$ is not shifted by the Wu class in general. The only $n$-independent cocycle with the correct shift in the problem is $\check{C}_W = \check{b}_W + \check{A}_W$. The fact that this cocycle is shifted by the Wu class was shown in Appendix B of \cite{Monnierb}, using crucial results of \cite{Witten:1999vg}.
 It was also argued in \cite{Monnierb} that $\check{C}_M  = \check{b}_M + \check{A}_M$ is the effective C-field coupling to the self-dual field on the worldvolume of a single M5-brane. It seems natural that the effective C-field coupling to the center-of-mass tensor multiplet should be given by the same expression.

Subtracting the contribution to the anomaly of the free center-of-mass tensor supermultiplet, we obtain a formula for the global anomaly of the (2,0) theory of type $A_{n-1}$:
\begin{align}
\label{EqAn20An}
& {\rm An}_{A_{n-1}}(U)  = {\rm An}_{nM5}(U) - {\rm An}_{TM,n}(U)  \\
& = \int_W \left( (n-1) J_8 - \frac{n^3-n}{24} p_2(\mathscr{N}_W) - \frac{n-1}{8}\sigma_W - \frac{n(n-1)}{2}h_W(2G_W + (n-1)h_W) \right) \notag \;.
\end{align}
where $G_W$ is the field strength of $\check{C}_W$.

As was discussed in Section \ref{SecStackM5}, if $w_4(\mathscr{N}_M) = 0$, there is a preferred choice for the global angular cocycle $\check{a}_M$, which results in $\check{b}_M = 0$. If the extensions of the normal bundle are such that $w_4(\mathscr{N}_U) = w_4(\mathscr{N}_W) = 0$, then we can extend the global angular cocycle on $\tilde{U}$ and $\tilde{W}$ in such a way that $\check{b}_U = \check{b}_W = 0$. In particular, $h_W = 0$ and the last term vanishes. This is for instance the case when $\mathscr{N}_M$ is trivial. However the cases where $\mathscr{N}_M$ is non-trivial are very important, as they correspond to twistings of the (2,0) theory. Then, even if $w_4(\mathscr{N}_M) = 0$, there is in general no reason that would force $w_4(\mathscr{N}_U) = w_4(\mathscr{N}_W) = 0$ for all the twisted doubles $U$. In fact, we will see that the last term is crucial for the consistency of \eqref{EqAn20An}.

It is interesting to note that there remains a dependence on the background C-field, through the extension $G_W$ of its field strength to $W$. There is as well a dependence on $h_W$, the field strength of \eqref{EqDefb}, and therefore a dependence on the choice of parameterization \eqref{EqDecompGtW}. These somewhat puzzling features can all be traced back to the decoupling of the center of mass degrees of freedom. This operation requires picking a differential cocycle of degree 3 shifted by the Wu class, which is the effective C-field on the worldvolume coupling to the center-of-mass tensor multiplet. There is no way to do this canonically and the choice we made, $\check{C}_M$, extended to $W$ as $\check{C}_W$, depends on \eqref{EqDecompGtW}. In contrast, the anomaly formula \eqref{EqAnnM53} for a stack of M5-branes, including the center of mass, is independent of \eqref{EqDecompGtW}.

A consequence of this fact is that the decomposition \eqref{EqDecompGtW} cannot be chosen freely on $W$. The definition of the (2,0) theory on $M$ should include a choice global angular differential cocycle $\check{a}_{\tilde{M}}$ on $\tilde{M}$, which should then be extended to $\tilde{U}$ and $\tilde{W}$, as was already suggested in our discussion of Section \ref{SecExt7-8-manifolds}. A choice of $\check{a}_{\tilde{M}}$ is effectively a choice of a vertical cotangent bundle on $\tilde{M}$. It is therefore not so surprising that when the normal bundle $\mathscr{N}_M$ is topologically non-trivial, such a choice has to be made in order to decouple the center of mass, and that this choice cannot be made canonically.

As we are only interested in the (2,0) theory, we should set the C-field on $M$ to a preferred value, for instance zero. Because of an analog of the Freed-Witten anomaly for self-dual fields, first described in \cite{Witten:1999vg}, this might not be consistent. We should rather set $\check{C}_M = \check{S}_M$, where $\check{S}_M$ is a certain 2-torsion differential cocycle determined by the anomaly cancellation condition (see Section 3.6 of \cite{Monniera}). Together with a choice $\check{a}_{\tilde{M}}$ of a global angular cocycle on $\tilde{M}$, this fixes the value of the M-theory C-field on $\tilde{M}$.

We can recover the local anomaly from \eqref{EqAn20An} by taking $U$ to be a mapping torus over a small homotopically trivial loop $c$ in the space of background fields. The holonomy of the anomaly connection along $c$ is then proportional to the value of its curvature inside the loop. In this case, we can take $W = M \times D^2$, $\tilde{W} = \tilde{M} \times D^2$, where $D^2$ is a 2-dimensional disk. As the metric alone is changing along $D^2$, only the metric-dependent terms can have a non-zero integral. But the only metric-dependent terms are the first two in \eqref{EqAn20An}. A comparison with \cite{Harvey:1998bx} (see also \eqref{EqLocAn20Theory}) shows that these two terms reproduce the index density governing the local anomaly derived in that paper. Let us also remark that in \cite{Harvey:1998bx}, it was assumed that the local gravitational anomaly cancellation, proven for a single M5-brane, holds as well for a stack of M5-branes. Our derivation requires no such assumption. We rather relied on the cancellation of global anomalies for non-intersecting M5-branes, proven in \cite{Monnierb} to deduce the anomaly at a generic point on the Coulomb branch. 

We will also see in the next section that the last two terms in \eqref{EqAn20An}, while having no effect on the local anomaly, are crucial for the anomaly to be consistent globally.


\subsection{A consistency check}

\label{SecConsCheck}

In this section we check that when \eqref{EqAn20An} is evaluated on a closed manifold $W$, it yields an integer. This ensures that the anomaly is well-defined, in the sense discussed in Section \ref{SecIncons20TheoryAn}. Strictly speaking, this check is not necessary. We obtained \eqref{EqAn20An} as the difference of two terms describing well-defined anomalies. One is the reduction of the characteristic form associated to the M-theory Chern-Simons term, which takes integral values on closed manifolds as shown in \cite{Witten:1996md}. The other is the global anomaly of the center of mass, which is shown in Appendix \ref{SecRelLiftWuClass} to take integral values on closed manifolds as well. Nevertheless, this is a good check on our computations and it involves some interesting algebraic topology. 

In the rest of this section, $W$ is a closed oriented 8-manifold. Let us first remark that the analysis of the cancellation of local anomalies for five-branes \cite{Witten:1996hc, Freed:1998tg} shows that $J_8 = \frac{1}{8}L(TW) - \frac{1}{2}I_f$, where $I_f$ is the index density of the chiral fermions on the worldvolume of a single M5-brane. As the Dirac operator associated to $I_f$ is quaternionic on an 8-dimensional manifold (see Section 3.1 of \cite{Monnierb}), its index is even and $\int_W \frac{1}{2}I_f$ is an integer. The term involving the Hirzebruch genus integrates to the signature of $W$ and cancels with the third term in \eqref{EqAn20An}. Therefore, all that remains to be shown is that the second and fourth terms in \eqref{EqAn20An} add up to an integer. 

To this end, it is useful to distinguish two cases, depending on whether $n$ is even or odd. For odd $n$, $n^3-n$ is a multiple of 24 (it is sufficient to check this explicitly for $n = 1$ to $23$), so the second term is an integer. To see that the last term is an integer as well in this case, we let $n = 2k+1$ and write it
\be
\label{Eq4thTermnOdd}
\int_W \frac{2k+1}{2} 2kh_W(2G_W + 2kh_W) \;.
\ee
But $2kh_W$ is a closed form with integral periods. $2G_W$ is a closed form with integral periods as well, but in addition it is a form lift of the Wu class (see Appendix \ref{SecRelLiftWuClass}). This implies that $2G_W$ is a characteristic element for the wedge product pairing on the space of closed forms on $W$ with integral periods, which implies that \eqref{Eq4thTermnOdd} is an integer.

In case $n$ is even, we need more sophisticated tools. Again, a straightforward inspection shows that for $n = 2k$ even, 
\be
4 \frac{n^3-n}{24} =  k \quad {\rm mod} \; 4 \;.
\ee
On the other hand, the fourth term in \eqref{EqAn20An} reads
\begin{align}
- k(2k-1) & \, \int_W h_W(2G_W + (2k-1)h_W) \\
 & \, = - \frac{(2k-1)}{2} \int_W 2kh_W (2G_W + 2kh_W) + k(2k-1) \int_W h_W^2 \notag\;.
\end{align}
For the same reason as above, the first term on the right-hand side is an integer, and as $h_W$ has half-integral periods, the second term belongs to $\mathbb{Z}/4$. As $k(2k-1) = k$ mod $4$, all we need to show is that $\int_W 4h_W^2 = \int_W p_2(\mathscr{N}_W)$ mod $4$.

For this, we need to introduce a cohomological operation, the Pontryagin square $\mathfrak{P}$. $\mathfrak{P}$ maps $H^\bullet(W;\mathbb{Z}_2)$ into $H^\bullet(W;\mathbb{Z}_4)$. Denoting by $\rho_k$ the reduction modulo $k$, the Pontryagin square has the property that $\mathfrak{P} \rho_2(u) = \rho_4(u^2)$ for any $u \in H^\bullet(W;\mathbb{Z})$. The action of the Pontryagin square on Stiefel-Whitney classes has been computed by Wu \cite{Wu1959} and can be found for instance in \cite{Thomas1960}:
\be
\mathfrak{P}(w_{2i}) = \rho_4(p_i) + \theta_2 \left( w_1 Sq^{2i-1} w_{2i} + \sum_{j = 0}^{i-1} w_{2j} w_{4i-2j} \right) \;.
\ee
In this formula, $Sq^i$ are the Steenrod squares and $\theta_2$ is the embedding of $H^\bullet(W;\mathbb{Z}_2)$ into $H^\bullet(W;\mathbb{Z}_4)$ induced by the corresponding embedding of $\mathbb{Z}_2$ into $\mathbb{Z}_4$. Applying this formula to the bundle $\mathscr{N}_W$, we see that $\mathfrak{P}(w_4(\mathscr{N}_W)) = p_2(\mathscr{N}_W)$ mod $4$, as $w_i(\mathscr{N}_W) = 0$ for $i > 5$. But now we can use the fact that $2h_W$ is a form lift of $w_4(\mathscr{N}_W)$, i.e. the periods of $2h_W$ on 4-cycles on $W$ are even or odd depending on whether $w_4(\mathscr{N}_W)$ has period 0 or 1. Together with the property of $\mathfrak{P}$ mentioned above, this implies that 
\be
\int_W 4h_W^2 = \int_W p_2(\mathscr{N}_W) \quad {\rm mod} \; 4 \;.
\ee
We have therefore shown that \eqref{EqAn20An} always takes integer values on closed manifolds $W$. The somewhat strange-looking fourth term is essential in order to cure the ambiguities of the second term.

\subsection{The conformal blocks}

\label{SecConfBlocks}

A potentially confusing point is the following. The geometric invariant ${\rm An}_{A_{n-1}}$ defines a line bundle $\mathscr{L}_{A_{n-1}}$ over the space of objects of the cobordism category $\mathfrak{C}$, that is over the space of 6-manifolds $M$ endowed with the data $\mathfrak{d}_M$. We expect the partition function of the (2,0) theory to be a section of this line bundle. 

But it is known that the (2,0) theory does not admit a single partition function. Rather, it has a space of ``conformal blocks'' whose dimension is given by the order of Lagrangian subgroups of $H^3(M;\mathbb{Z}_n)$ with respect to the cup product pairing on $H^3(M;\mathbb{Z}_n)$ \cite{Witten:1998wy, Witten:2009at}.

These two statements can be reconciled as follows. The partition function $Z_{n{\rm M5}}$ of a stack of M5-branes is well-defined and unique. The conformal blocks arise after the decoupling of the center-of-mass tensor multiplet, because the self-dual field of charge $n$ that it contains does not have a single partition function, but rather a set of conformal blocks $Z_{{\rm CM},x}$ \cite{Witten:1998wy}. They form a representation of a central extension $G_H$ of $H^3(M;\mathbb{Z}_n)$ and can be parameterized by an index $x$ running over a Lagrangian subgroup of $H^3(M;\mathbb{Z}_n)$. As $Z_{n{\rm M5}}$ is invariant under $G_H$ and $Z_{{\rm CM},x}$ transforms in the irreducible unitary representation of $G_H$, it is natural to expect that the (2,0) theory has conformal blocks $Z_{A_{n-1},x}$ valued in the dual representation, and that one can write $Z_{n{\rm M5}} = \sum_x Z_{{\rm CM},x}Z_{A_{n-1},x}$. Similar statements in the case of $N=4$ super Yang-Mills were put forward in \cite{Belov:2004ht}. Now $Z_{{\rm CM},x}$ are all sections of the same line bundle. In order for the sum to make sense, the conformal blocks $Z_{A_{n-1},x}$ should all be sections of a unique line bundle; this is the line bundle $\mathscr{L}_{A_{n-1}}$.

The fact that $Z_{{\rm CM},x}$ are sections of the same line bundle for all $x$ also justifies our computation of the anomaly of the (2,0) theory in Section \ref{SecGlobAn20ThAn} by subtracting the anomaly of the center-of-mass tensor multiplet from the anomaly of the stack of M5-branes.  

In more detail, recall that we can parameterize the M-theory C-field on $\tilde{M}$ as follows
\be
\label{EqParC-fieldOnM}
\check{C}_{\tilde{M}} = n\check{a}_{\tilde{M}} + \pi^\ast(\check{A}_M) \;.
\ee
Clearly, the differential cohomology class of $\check{C}_{\tilde{M}}$ is left invariant under shifts 
\be
\label{EqReparC-fieldtM}
\check{a}_{\tilde{M}} \rightarrow \check{a}_{\tilde{M}} + \pi^\ast(\check{B}_M) \;,
\ee
where $\check{B}_M$ is a differential cocycle on $M$ representing an order $n$ differential cohomology class. (From now on, we will make a slight abuse of language and refer to $\check{B}_M$  as an ``order $n$ torsion differential cocycle'', even if $n\check{B}_M$ is zero only in cohomology.) The effective C-field to which the center-of-mass tensor multiplet couples is
\be
\check{C}_M = \frac{1}{2}\pi_\ast(\check{a}_{\tilde{M}} \cup \check{a}_{\tilde{M}}) + \check{A}_M \;,
\ee
transforming as: 
\be
\label{EqTransCM}
\check{C}_M \rightarrow \check{C}_M + \check{B}_M \;.
\ee
So the differential cohomology class of $\check{C}_M$ is not invariant under such changes of parameterization. The transformation \eqref{EqReparC-fieldtM} acts on the conformal blocks of the center of mass, which are functions of $\check{C}_M$, but leaves $Z_{n{\rm M5}}$ invariant. 

At least if there is no torsion in $H^3(M;\mathbb{Z})$, we can be more precise. In this case, a (linearly dependent) set of generators of the conformal blocks of the center of mass is provided by level $n$ Siegel theta functions over the torus $\mathcal{J}_n$ of flat (gauge equivalence classes of) C-fields \cite{Henningson:2010rc}. The latter is defined by $\mathcal{J}_n = H^3(M;\mathbb{R})/nH^3_{\mathbb{Z}}(M;\mathbb{Z})$, where $H^3_{\mathbb{Z}}(M;\mathbb{R})$ denotes the de Rham cohomology classes having integral periods on $M$. \eqref{EqTransCM} is then simply an order $n$ rotation of $\mathcal{J}_n$. It is well-known that the theta functions of level $n$ are in bijection with order $n$ points of $\mathcal{J}_n$, and therefore \eqref{EqTransCM} simply permutes the elements in our set of conformal blocks. If torsion is present, the space of flat C-fields $\check{H}^4_{\rm flat}(M)$ fit in a short exact sequence
\be
0 \rightarrow \mathcal{J}_n \rightarrow \check{H}^4_{\rm flat}(M) \rightarrow H^4_{(n)}(M;\mathbb{Z}) \;,
\ee
where $H^4_{(n)}(M;\mathbb{Z})$ is the subgroup generated by the elements of order $n$ in $H^4(M;\mathbb{Z})$. The order $n$ differential cocycle $\check{B}_M$ then acts on $\check{H}^4_{\rm flat}(M)$ by order $n$ rotations of the components $\mathcal{J}_n$ together with permutations of these.

In summary, the data $\mathfrak{d}$ defined in Section \ref{SecExt7-8-manifolds} is the data required to define the (2,0) theory \emph{and select a particular conformal block}. All the conformal blocks of the (2,0) theory are sections of the same line bundle over the moduli space of manifolds endowed with the data $\mathfrak{d}$. This line bundle is determined by ${\rm An}_{A_{n-1}}$ as explained in Section \ref{SecAnomCob}. The conformal blocks share the same anomaly and are permuted by the shifts \eqref{EqReparC-fieldtM} of $\check{a}_{\tilde{M}}$.

In contrast, the data required to define the $A_{n-1}$ (2,0) theory without a choice of conformal block is (keeping the notation of Section \ref{SecExt7-8-manifolds}) $\mathfrak{d}_M' = (\mathscr{N}_M, \check{C}_{\tilde{M}}, n\check{a}_{\tilde{M}})$, where now $\check{a}_{\tilde{M}}$ is determined up to a torsion element of order $n$. Over the moduli space of manifolds with data $\mathfrak{d}'$, the conformal blocks should rather be seen as sections of a vector bundle, whose rank is given by the order of Lagrangian subspaces of $H^3(M;\mathbb{Z}_n)$. To describe the anomaly precisely in this context requires to promote the geometric invariant ${\rm An}_{A_{n-1}}$ to an anomaly field theory \cite{Freed:2014iua}. The relevant anomaly field theory is a type of quantum Dijkgraaf-Witten theory, whose classical version is given by ${\rm An}_{A_{n-1}}$ and whose quantization is performed by summing over the torsion component of $\check{a}_{\tilde{M}}$. The details of this construction will appear in a future paper \cite{Monnierc}. 

This generalization is important because there exist diffeomorphisms that fail to preserve the torsion component of $\check{a}_{\tilde{M}}$. Such diffeomorphisms permute the conformal blocks of the (2,0) theory and their action cannot be accounted for naturally using the formalism developed in the present paper. \footnote{We thank the referee for making this point.} Indeed, they were implicitly ruled out by the choice of the data $\mathfrak{d}$, which they fail to preserve.

Let us also remark also that the picture developed in this section shows that all the subtleties of the (2,0) theory at a non-generic point on its Coulomb branch are captured by the partition function $Z_{n{\rm M5}}$ of the stack of M5-branes and are independent of the choice of conformal block. 

\subsection{The origin of the Hopf-Wess-Zumino terms}

\label{SecOrigHWZTerms}

A naive computation of the local gravitational anomaly of the (2,0) $A_{n-1}$ theory by summing the anomalies of the $n$ tensor multiplets present at a generic point on the Coulomb branch fails to capture the whole anomaly of the theory. It was proposed in \cite{Intriligator:2000eq} that the effective theory on the Coulomb branch contains certain Wess-Zumino terms, dubbed ``Hopf-Wess-Zumino terms'', compensating for the difference between the naive computation and the correct anomaly found in \cite{Harvey:1998bx}. In our framework, those terms are responsible for the second and fourth terms of the anomaly \eqref{EqAn20An}, although only the second term was accounted for in \cite{Intriligator:2000eq}. We show here that these Wess-Zumino terms can be pictured very concretely as the topological modes of the M-theory C-field that get trapped between the M5-branes when the decoupling limit of Section \ref{SecIdComp} is taken. A somewhat similar idea was mentioned in \cite{Kalkkinen:2002tk}.

Recall our method to compute the anomaly inflow in Section \ref{SecIdComp}. We considered a set of $n$ non-intersecting M5-branes separated by a typical distance $r$. We picked a tubular neighborhood $N_0$ of $M$ including all the M5-branes, say of radius $R_0$. We then rescaled $r$ to zero while keeping $R_0$ fixed. Equivalently, we could have kept $r$ fixed and taken $R_0$ to infinity. 

\begin{figure}[htb]

  \centering
  \includegraphics[width=\textwidth]{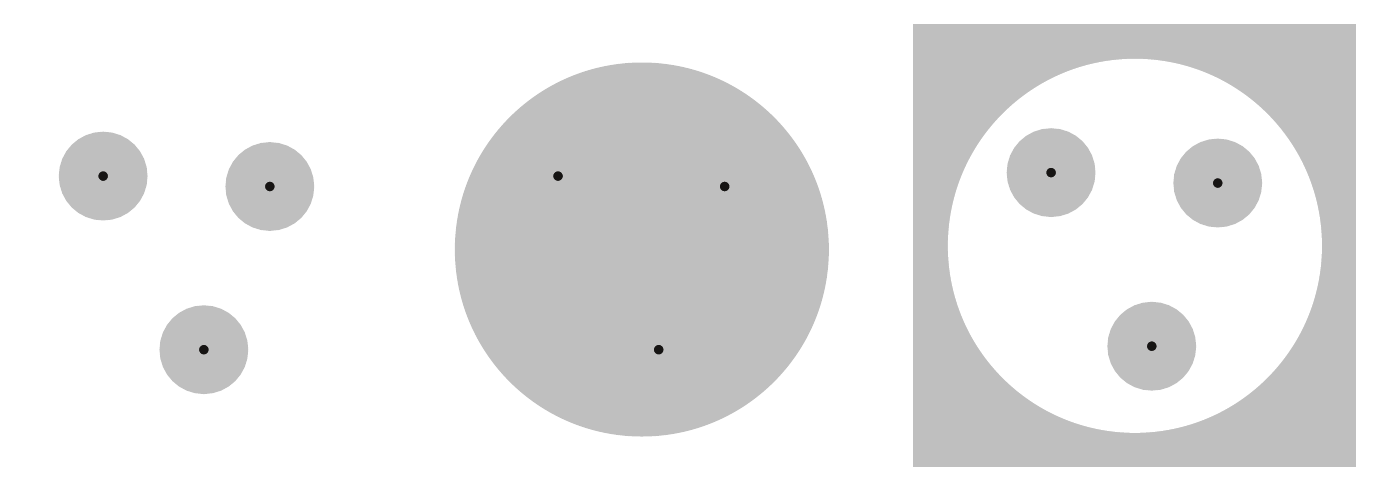}
  \caption{\emph{A pictorial representation of the arguments in this section. The three pictures represent a fiber over a point of $M$. On the left, the setup used to compute the anomaly due to a set of non-intersecting M5-branes (black dots). Tubular neighborhoods (grayed out) are cut out and there is an anomaly inflow from the M-theory Chern-Simons term in the bulk (in white). This inflow cancels exactly the sum of the anomalies of the isolated M5-branes. 
  In the middle, the setup presented in Section \ref{SecIdComp} in order to compute the anomaly of a stack of M5-branes on its Coulomb branch. A single tubular neighborhood of $M$ is cut out and includes all the M5-branes. Again, there is an anomaly inflow due to the M-theory Chern-Simons term in the bulk.
    On the right, the difference between the anomaly inflow contributions can be attributed to the M-theory Chern-Simons term integrated over the region $N$, represented in white.}}
   \label{Fig1}
\end{figure}

An alternative way of computing the anomaly is the following. We take $n$ non-intersecting tubular neighborhoods $N_i$ of the worldvolumes $M_i$ of each M5-brane, of radius $R_i << r$. Let us write $\tilde{M}_i = \partial N_i$, a 4-sphere bundle over $M_i$. If this setup is extended to a 7-manifold $U$, we can compute the inflow due to the bulk on this system by evaluating the M-theory Chern-Simons term on $\bigcup_i \tilde{U}_i$ and taking a limit in which $R_i$ scale down to zero. It is clear that the anomaly obtained in this way is the sum of the anomalies due to each M5-brane. In other words, via this procedure, we obtain the naive anomaly mentioned at the beginning of this section.

But now the reason why the two procedures do not give the same answer is clear. In the first procedure, in addition to the M5-branes themselves, we also included a part of the bulk of M-theory, namely 
\be
N := N_0 \backslash \bigcup_{i=1}^n N_i \;.
\ee
The M-theory Chern-Simons term on $N$ is anomalous, because $N$ has boundaries. In fact, when $M$ is promoted to a 7-manifold $U$, the anomaly due to the Chern-Simons term can be obtained by evaluating it on $\tilde{U} \cup \bigcup_i (\overline{\tilde{U}}_i)$. We see that the anomaly difference between a stack of M5-branes on its Coulomb branch and a set of non-intersecting M5-branes is entirely due to the M-theory Chern-Simons term on $N$. See Figure \ref{Fig1}.

$N$ is a fiber bundle over $M$. The fiber is a 5-ball of radius $R$ out of which $n$ 5-balls of radii $R_i$ have been carved out. Writing $\pi$ for the bundle map and $\check{cs}_{11}$ for the integrand of \eqref{EqM-thCSTermDiffC}, the Hopf-Wess-Zumino term is 
\be
\
\check{\rm wz} = \pi_\ast({\rm \check{cs}}_{11}) \;,
\ee
i.e. the integral of the Chern-Simons integrand over the fibers of $N$, yielding a top differential cocycle on $M$. By definition, we have
\be
\int_M \check{\rm wz} = \int_N {\rm \check{cs}}_{11} \;,
\ee
and $\check{\rm wz}$ is a local term on $M$ accounting for the anomaly difference.
Finally, we have to take the limit $R_0 \rightarrow \infty$, $R_i \rightarrow 0$. The advantage of this formulation is that it is completely general: no assumption is made on the topology of the system of M5-branes, except that they are not intersecting. Of course, in order to get an explicit expression for the Hopf-Wess-Zumino terms, the setup should be simple enough so that the integration over the fibers of $N$ can be performed explicitly.

\section{The global anomaly of a generic A-D-E (2,0) theory}

\label{SecGenGlobAnForm}

In the present section, we show that the anomaly we found for A-type theories can be naturally rewritten in terms of basic Lie algebra data. This result yields a conjectural formula for the global anomaly of a generic (2,0) theory, which is automatically compatible with the exceptional isomorphisms among the A-D-E Lie algebras. We also provide a consistency check by showing that the corresponding anomaly is well-defined in the sense of Section \ref{SecIncons20TheoryAn}. 

\subsection{The anomaly formula}

\label{SecAnFormADE}

For a simply laced simple Lie algebra $\mathfrak{g}$, the general global anomaly formula reads
\be
\label{EqGenAnFormADE}
{\rm An}_{\mathfrak{g}}(U) = \int_W \left( r(\mathfrak{g}) J_8 - \frac{|\mathfrak{g}|{\rm h}_\mathfrak{g}}{24} p_2(\mathscr{N}_W) - \frac{r(\mathfrak{g})}{8}\sigma_W - r(\mathfrak{g}) {\rm h}_{\mathfrak{g}} h_W \left(G_W - h_W\right) - \frac{|\mathfrak{g}|{\rm h}_\mathfrak{g}}{2} h_W^2 \right) \;,
\ee
where $|\mathfrak{g}|$ is the dimension of $\mathfrak{g}$, $r(\mathfrak{g})$ its rank and ${\rm h}_\mathfrak{g}$ its dual Coxeter number. \eqref{EqGenAnFormADE} coincides with \eqref{EqAn20An} for the $A$-type theory. Note that as \eqref{EqGenAnFormADE} is expressed in terms of data intrinsic to $\mathfrak{g}$, this formula is automatically compatible with the exceptional isomorphisms among elements of the $A$, $D$ and $E$ series. For the reader's convenience, we recall the values of the dimension and of the dual Coxeter numbers:
\be
\begin{array}{l|ll}
& |\mathfrak{g}| & {\rm h}_\mathfrak{g} \\ \hline
A_n & n^2 + 2n & n + 1 \\
D_n & 2n^2 - n & 2n - 2 \\
E_6 & 78 & 12 \\
E_7 & 133 & 18 \\
E_8 & 248 & 30 \\
\end{array} \;.
\ee
Of course, the rank of $X_n$ is $n$. The first two terms of \eqref{EqGenAnFormADE}, which are the only ones relevant for the local anomaly, were obtained in \cite{Intriligator:2000eq}.

\subsection{Data required to specify a (2,0) theory}

As we already discussed, in the A-type theories, $h_W$ and $G_W$ have a clear interpretation in terms of M-theory data. For the other (2,0) theories, it is not obvious how these objects should be interpreted, especially for the E-type theories, where there is no M-theory realization. We define here data on $M$ that naturally give rise to $h_W$ and $G_W$. Presumably, this data is required in order to define the (2,0) theory on a 6-manifold $M$, independently of any M-theory realization.

We already know that in order to define the (Euclidean) (2,0) theory, we need a simply laced Lie algebra $\mathfrak{g}$, a smooth oriented Riemannian manifold $M$, an R-symmetry bundle $\mathscr{N}_M$ satisfying \eqref{EqRelSWClassesTMN} and a spin structure on $TM \oplus \mathscr{N}$. We claim that in addition to this we need a choice of global angular differential cocycle $\check{a}_{\tilde{M}}$ on the 4-sphere bundle $\tilde{M}$ associated to $\mathscr{N}_M$. 

We saw that in the A-type theories, such a choice was necessary in order to perform the decoupling of the center-of-mass degrees of freedom. $\check{a}_{\tilde{M}}$, together with the requirement $\check{C}_M = \check{S}_M$, fully determines the M-theory C-field on $\tilde{M}$. Similarly, in any (2,0) theory, a choice of $\check{a}_{\tilde{M}}$ allows one to define $\check{b}_M := \frac{1}{2} \pi_\ast(\check{a}_{\tilde{M}} \cup \check{a}_{\tilde{M}})$, $\check{C}_M = \check{S}_M$ and $\check{A}_{M} = \check{C}_M - \check{b}_M$. In anomaly computations, this data is extended to 7- and 8-dimensional manifolds $U$ and $W$. $h_W$ and $G_W$ in \eqref{EqGenAnFormADE} are then respectively the field strengths of $\check{b}_W$ and $\check{C}_W$.

\subsection{Consistency}

Using our analysis of the $A_n$ case, it is easy to see that \eqref{EqGenAnFormADE} yields an integer on closed manifolds for any $\mathfrak{g}$, and therefore that it describes a well-defined anomaly. Indeed, the following terms take independently integer values on closed manifolds:
\be
\label{EqIntTerGenForm1}
\int_W \left(r(\mathfrak{g}) J_8 - \frac{r(\mathfrak{g})}{8}\sigma_W \right) \;, 
\ee
\be
\label{EqIntTerGenForm2}
\int_W \left( \frac{|\mathfrak{g}|{\rm h}_\mathfrak{g}}{24} p_2(\mathscr{N}_W) + \frac{|\mathfrak{g}|{\rm h}_\mathfrak{g}}{2} h_W^2   \right) \;, 
\ee
\be
\label{EqIntTerGenForm3}
\int_W r(\mathfrak{g}) {\rm h}_{\mathfrak{g}} h_W (G_W - h_W) \;.
\ee
The fact that \eqref{EqIntTerGenForm1} is an integer was explained in Section \ref{SecGlobAnCM}. To show that \eqref{EqIntTerGenForm2} is an integer, recall that we proved in Section \ref{SecConsCheck} that $\int_W p_2(\mathscr{N}_W) = 4\int_W h_W^2$ mod 4. Integrality will follow provided that $|\mathfrak{g}|{\rm h}_\mathfrak{g}/6 = -|\mathfrak{g}|{\rm h}_\mathfrak{g}/2$ mod 4, which requires $|\mathfrak{g}|{\rm h}_\mathfrak{g}$ to be a multiple of 6. This can be readily checked in each case. Finally, the last term takes integer value because $r(\mathfrak{g}) {\rm h}_{\mathfrak{g}}$ is even, $2G_W$ and 
$2h_W$ have integral periods, and $2G_W$ is a lift of the Wu class, hence is a characteristic element of the wedge product pairing on forms with integral periods (see Appendix \ref{SecRelLiftWuClass}). 

\subsection{Further comments}

We do not have a compelling picture explaining how the conformal blocks arise in D- and E-type theories.  

It would be interesting to derive the anomaly formula \eqref{EqGenAnFormADE} from the type IIB realization of the (2,0) theories, but we leave this for future work. 

We attempted to derive \eqref{EqGenAnFormADE} for the $D_n$ series using M-theory on an $\mathbb{R}^5/\mathbb{Z}_2$ orbifold. However we cannot perform a rigorous derivation, because of a puzzling feature of the orbifold background: the anomaly of the orbifold is not well-defined globally. This can be understood from the fact that the $\mathbb{R}^5/\mathbb{Z}_2$ sources a half-quantum of flux of the M-theory C-field. The orbifold singularity has an anomaly ``${\rm An}_{O}(U) = -\int_W \frac{1}{2}J_8$'' canceled by anomaly inflow. But as $\frac{1}{2}J_8$ does not integrate to an integer on a closed manifold $W$, the expression above does not define a geometric invariant of $U$. We therefore encounter the same problem that was plaguing the naive anomaly formula \eqref{EqGuessAn20Theory} for the (2,0) theory, and unlike in the latter case, there seems to be no extra term appearing to cure the inconsistency. Closing our eyes to this problem, a calculation very similar to that for $A_n$ theory yields all the terms in \eqref{EqGenAnFormADE} with the right prefactors, except for the fourth one. Because of this, the anomaly derived in this way is inconsistent. We expect that a proper understanding of the orbifold's anomaly should cure this problem.

\subsection*{Acknowledgments}

This research was supported in part by SNF Grant No.200020-149150/1.

\appendix

\section{Properties of lifts of the Wu class}

\label{SecRelLiftWuClass}

We review here some basic properties of the Wu class and its lifts, which play an important role in the proofs of the paper.

\subsection{The Wu class and its lifts}

Recall that the Wu class on a closed manifold $X$ of dimension $d$ is an element $\nu = \sum_k \nu_k$ of $H^\bullet(X;\mathbbm{Z}_2)$ satisfying 
\be
\langle Sq^k(x), [X] \rangle = \langle x \cup \nu_k, [X] \rangle
\ee
for $x \in H^{d-k}(X;\mathbbm{Z}_2)$. $Sq$ denotes here the Steenrod operations and $[X]$ is the fundamental homology class of $X$. In case the dimension of $X$ is even and $x$ is of degree $d/2$, $Sq^{d/2}(x) = x \cup x$ and the above reduces to $x \cup x = x \cup \nu_{d/2}$. $\nu$ can be expressed in terms of the Stiefel-Whitney classes. For instance, on an oriented manifold, $\nu_4 = w_4 + w_2^2$.

We call a closed differential form $\lambda \in \Omega^k(X)$ a \textit{form lift of the Wu class} if the periods of $\lambda$ are integers and equal to the periods of $\nu_k$ modulo 2. Let $\check{C}$ be a differential cocycle shifted by the Wu class on $X$ (see Section 2.1 of \cite{Monnierb}) and let $G$ be its field strength. Then $2G$ is a form lift of the Wu class.

\subsection{Proof of integrality}

\label{SecProofInt}

Let $X$ be of even dimension $d$ and let $\lambda$ be a form lift of the Wu class of degree $d/2$. Then $\lambda$ is a characteristic element for the wedge product pairing on the space $\Omega^{d/2}_\mathbb{Z}(X)$ of closed forms with integral periods, namely
\be
\label{EqPropLiftWu}
\int_X F \wedge F = \int_X F \wedge \lambda \quad {\rm mod} \; 2
\ee
for any $F \in \Omega^{d/2}_\mathbb{Z}(X)$. This follows from the corresponding property of $\nu_{d/2}$ on $H^{d/2}(X;\mathbb{Z}_2)$ and the compatibility of the wedge and cup product pairings. A direct consequence of this fact is 

\begin{proposition} 
\label{PropIntAnSD}
Let $W$ be a closed 8-manifold and $\lambda$ be a form lift of the Wu class of degree 4. The expression 
\be
\frac{1}{8} \int_W (L(TW) - \lambda^2)
\ee
takes integer values, where $L(TW)$ is the Hirzebruch L-genus of $TW$.
\end{proposition}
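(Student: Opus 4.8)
The plan is to reduce the statement to a congruence between two characteristic numbers modulo $8$, and to handle the two "halves" of the discrepancy separately: a factor of $2$ coming from the signature-versus-$L$-genus congruence, and another factor coming from the relation between $\lambda^2$ and $p_2$. First I would recall that on a closed $8$-manifold the Hirzebruch signature theorem gives $\int_W L(TW) = \sigma_W$, so it suffices to prove $\sigma_W \equiv \int_W \lambda^2 \pmod 8$. This naturally suggests splitting the problem: show $\sigma_W \equiv \int_W \lambda^2 \pmod 4$ using the Pontryagin square (as in the proof of the $A_n$ consistency check in Section \ref{SecConsCheck}), and then upgrade the congruence from mod $4$ to mod $8$.

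For the mod $4$ step, the key input is equation \eqref{EqPropLiftWu}: since $\lambda$ is a form lift of the degree-$4$ Wu class $\nu_4 = w_4 + w_2^2$, its reduction $\rho_2(\lambda) = \nu_4$ in $H^4(W;\mathbb{Z}_2)$. I would invoke the Pontryagin square $\mathfrak{P}: H^4(W;\mathbb{Z}_2) \to H^8(W;\mathbb{Z}_4)$, whose value on $\nu_4$ is a classical computation (Wu, cited via \cite{Wu1959, Thomas1960}) relating it to $\rho_4(p_2)$ plus correction terms built from lower Stiefel-Whitney classes. Evaluating on the fundamental class, and using that the periods of $\lambda^2$ reduce mod $4$ to $\mathfrak{P}(\nu_4)$ evaluated on $[W]$, one gets $\int_W \lambda^2 \equiv \langle \mathfrak{P}(\nu_4), [W]\rangle \pmod 4$, which one then compares with $\int_W p_2$ and ultimately with $\sigma_W$ via $\sigma_W = \frac{1}{45}\int_W(7p_2 - p_1^2)$. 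The clean way to package all of this is the classical theorem (essentially due to Milnor, or Hirzebruch–Hopf) that for a closed oriented $8$-manifold $\sigma_W \equiv \langle \mathfrak{P}(\nu_4),[W]\rangle \pmod 8$ — this already gives the mod $8$ statement directly once combined with $\langle \mathfrak{P}(\nu_4),[W]\rangle \equiv \int_W\lambda^2 \pmod 4$; the remaining factor of two then needs the parity argument below.

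To get from mod $4$ to mod $8$ I would use the self-dual field interpretation: the quantity $\frac{1}{8}\int_W(L(TW)-\lambda^2)$ is, up to the integral index-theoretic term, the evaluation on a closed manifold of the self-dual field anomaly \eqref{EqAnnSD1} with $G_W$ chosen so that $2G_W = \lambda$, which by the results of \cite{hopkins-2005-70, Monniera} is known to be $\mathbb{Z}$-valued; alternatively, and more self-containedly, one observes that $\frac18(L(TW) - \lambda^2) = \frac18(L(TW) - \sigma_W) - \frac18(\lambda^2 - \sigma_W)$ and that the first term is an integer by the divisibility of $(L - \sigma)$ by $8$ for $8$-manifolds (equivalently $p_1^2 \equiv 7 p_2 \pmod{45}$ refinements, or simply the Ochanine-type result that $L(TW) - \sigma_W \in 8\mathbb{Z}$ on closed spin-ish $8$-manifolds — in fact it holds for all closed oriented $8$-manifolds because $L_2 = \frac{1}{45}(7p_2 - p_1^2)$ and one checks $7p_2 - p_1^2 \equiv 45\sigma \pmod{360}$), while the second term is an integer because $\frac12(\lambda^2 - \sigma_W)$ is an integer by \eqref{EqPropLiftWu} (the signature of a unimodular lattice is congruent mod $8$ to the square of any characteristic vector, van der Blij's lemma) and that integer is even.

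The main obstacle I anticipate is the bookkeeping of the correction terms in Wu's formula for $\mathfrak{P}(\nu_4)$ and confirming that they contribute trivially mod the relevant power of $2$ when evaluated on $[W]$ — this requires knowing which Stiefel-Whitney numbers of $W$ vanish or are constrained (here one uses that $W$ is oriented, hence $w_1 = 0$, killing the $w_1 Sq^3 w_4$ term, and that the remaining $\sum w_{2j} w_{8-2j}$ terms can be reorganized). The cleanest route, which I would adopt to sidestep the computation, is to cite directly van der Blij's lemma on characteristic vectors of unimodular lattices together with the integrality of $(L(TW)-\sigma_W)/8$: then the proposition is immediate, since $\frac18\int_W(L(TW)-\lambda^2) = \frac{L(TW)-\sigma_W}{8} + \frac{\sigma_W - \int_W\lambda^2}{8}$ and both summands are integers.
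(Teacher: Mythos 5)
Your adopted route is exactly the paper's proof: the Hirzebruch signature theorem gives $\int_W L(TW)=\sigma_W$ (so your first summand $\tfrac{1}{8}\int_W(L(TW)-\sigma_W)$ is in fact identically zero, not merely divisible by $8$), and van der Blij's lemma --- the norm of a characteristic element of a unimodular lattice equals the signature mod $8$, which the paper cites from Brumfiel--Morgan --- applied to $\lambda$, a characteristic element of the intersection lattice by \eqref{EqPropLiftWu}, gives $\int_W\lambda^2\equiv\sigma_W\pmod 8$. The Pontryagin-square detour and the mod-$4$-to-mod-$8$ upgrade you sketch are unnecessary.
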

\begin{proof}
The norm of any characteristic element of a unimodular lattice is equal to the signature modulo 8. (This is a special case of Theorem 2.9 of \cite{Brumfiel1973}, valid for any lattice.) The proposition then follows from the fact that the integral of the L-genus over the manifold yields the signature.
\end{proof}

{
\small

\providecommand{\href}[2]{#2}\begingroup\raggedright\endgroup

}

\end{document}